\newtheorem{theorem}{Theorem}
\newtheorem{Lemma}{Lemma}
\newtheorem{corollary}{Corollary}
\newtheorem{proposition}{Proposition}
\newtheorem{definition}{Definition}
\begin{document}

\title{Optimizing Peak Age of Information in MEC Systems: Computing Preemption and Non-preemption}
\author{Jianhang~Zhu and Jie~Gong,~\IEEEmembership{Member,~IEEE}
\IEEEcompsocitemizethanks
{\IEEEcompsocthanksitem J. Zhu and J. Gong are with the School of Computer Science and Engineering, and the Guangdong Key Laboratory of Information Security Technology, Sun Yat-sen University, Guangzhou 510006, China. Emails: zhujh26@mail2.sysu.edu.cn, gongj26@mail.sysu.edu.cn.
}
}

\maketitle
\begin{abstract}
  The freshness of information in real-time monitoring systems has received increasing attention, with Age of Information (AoI) emerging as a novel metric for measuring information freshness. In many applications, update packets need to be computed before being delivered to a destination. Mobile edge computing (MEC) is a promising approach for efficiently accomplishing the computing process, where the transmission process and computation process are coupled, jointly affecting freshness. In this paper, we aim to minimize the average peak AoI (PAoI) in an MEC system. We consider the generate-at-will source model and study when to generate a new update in two edge server setups: 1) computing preemption, where the packet in the computing process will be preempted by the newly arrived one, and 2) non-preemption, where the newly arrived packet will wait in the queue until the current one completes computing. We prove that the fixed threshold policy is optimal in a non-preemptive system for arbitrary transmission time and computation time distributions. In a preemptive system, we show that the transmission-aware threshold policy is optimal when the computing time follows an exponential distribution. Our numerical simulation results not only validate the theoretical findings but also demonstrate that: 1) in our problem, preemptive systems are not always superior to non-preemptive systems, even with exponential distribution, and 2) as the ratio of the mean transmission time to the mean computation time increases, the optimal threshold increases in preemptive systems but decreases in non-preemptive systems.
\end{abstract}

\begin{IEEEkeywords}
Age of information, transmission-computation trade-off, mobile edge computing, service preemption
\end{IEEEkeywords}

\section{Introduction}
Recently, there has been a trend in mobile computing towards shifting cloud functions to network edges, such as base stations and access points, in order to utilize the vast amount of idle computation power and storage space for computation-intensive and latency-critical tasks of mobile devices. This trend is known as Mobile Edge Computing (MEC) \cite{mao2017survey}. MEC provides low-latency services, and combined with the extensive data collection capabilities of the Internet of Things (IoT) \cite{IoT2015}, it enables various real-time applications such as remote monitoring and control, phase packet update in smart grids, and environment monitoring for autonomous driving. The performance of these systems is closely tied to the freshness of the information they provide. The concept of \emph{Age of Information} (AoI), defined as the time elapsed since the generation of the last received update, was introduced in \cite{kaul2012real} to quantify this freshness. In MEC systems, in addition to transmission latency and update frequency, the impact of computation on AoI should also be taken into account.

\begin{figure}[!t]
  \centering
  \includegraphics[width=0.9\linewidth]{./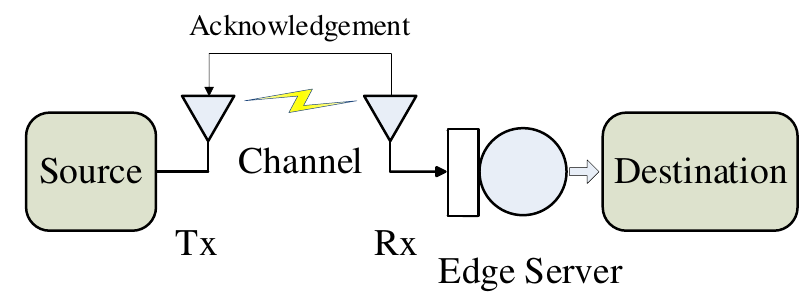}
  \caption{Status update system with MEC.} \label{fig:system}
\end{figure}

To address this issue, we study a state update system incorporating MEC, as illustrated in Fig.~\ref{fig:system}, where a source node generates update packets and transmits them to an edge server via a channel. The server then computes the packets and sends the results to a destination. There is a queue at the server to buffer the received packets when necessary. Both transmission time and computation time are assumed to be random and unknown apriori. Computation affects AoI performance in two ways. First, computation delay at the edge server directly increases the age of updates. Second, updates may wait in the server queue due to limited computing resources at the edge server and hence, become stale. When computation time is taken into account, the source should choose to generate and transmit a new update either before or after current update's computation is complete. The former option utilizes the channel and the edge server simultaneously, which may benefit AoI performance, but may also result in new data becoming stale in the queue. The latter can completely avoid queueing at the edge server, but increases the inter-generation time. Thus, the status update process must be carefully designed to minimize AoI by avoiding queueing in the server while reducing the inter-generation time.

As mentioned before, transmitting a new update before the completion of the current update's computation can potentially reduce the AoI, as it better utilizes the transmission resource of the channel and the computing resource of the edge server. However, we inevitably encounter queueing at the edge server which is harmful to AoI performance. To tackle the problem, this paper investigates two server configurations: 1) computing preemption where the packet in computing will be preempted by the newly arrived one and 2) non-preemption where the newly arrived packet will wait in queue until the current one completes computing. As new update is fresher, systems with preemptive server can typically achieve better performance. In \cite{kaul2012status}, it is shown that an LCFS queue with preemption achieves a lower average age compared to the model without preemption. However, whether preemptive server outperforms non-preemptive server in MEC case is not clear, which will be investigated in our work.


\subsection{Related Work}

The Age of Information has become an important metric in various status update systems, leading to studies aiming at minimizing it. These studies can be classified into two categories based on the type of source. The first category focuses on uncontrollable sources, as seen in works such as \cite{kaul2012real,yate2012real,bt2015age,huang2015opt,bed2019the,yates2019theage}. In these works, minimizing AoI typically involves optimizing the service rate and packet management strategy in the queues. For instance, \cite{kaul2012real} analyzed the average AoI in the First-Come-First-Serve (FCFS) system for different types of queues and found that there exists an optimal offered load for each queue to minimize the average AoI.  In \cite{yates2019theage}, a new simplified technique for evaluating AoI in finite-state continuous-time queuing systems was derived for the multi-source LCFS system with Poisson arrivals and exponential service time. The second category considers controllable sources, also known as the generate-at-will source model, as seen in \cite{sun2017update,arafa2020age,gu2021opt,cham2021min,gong2022sleep,arafa2021timely}. For example, \cite{sun2017update} considered a generate-at-will source model where updates can be generated at any time according to a scheduling policy and proved that a threshold-based policy is optimal when considering independent and identically distributed transmission times. However, these works did not consider the impact of computation on AoI.

The impact of computing on AoI has garnered increasing attention~\cite{alabbasi2018joint, arafa2019timely, zou2021opt, kuang2019age, gong2019reducing, zhong2019age, song2019age, li2021age, zhu2022online}, as the information contained in a status update packet is often not revealed until it has been processed. Ref.\cite{alabbasi2018joint} explored the impact of computing on AoI by scheduling computing tasks in the central cloud. The scheduling policy for update cloud computing, ignoring transmission time, was studied in~\cite{arafa2019timely}. In~\cite{zou2021opt}, the trade-off between computing and transmission was analyzed, where each packet was pre-processed before being transmitted. In~\cite{kuang2019age} and~\cite{gong2019reducing}, the average AoI with exponential transmission time and service time was analyzed for the single-user case when MEC was considered. However, these works did not address the fundamental question of what the optimal scheduling policy is to achieve the minimum AoI in an MEC system that considers both transmission and computation times. Recently, the authors in \cite{cham2021min} investigated the optimal scheduling policy in a single-source single-server system with a fixed request delay. The requests initiated by the monitor arrive at the source after a fixed delay, and then the source generates an update. Transmission time is not taken into account, so there is no challenge posed by the transmission-computation coupling problem. In our research, we consider both transmission and computation time to address this fundamental question.

Peak AoI is a variant of AoI that reflects the average of the maximum AoI values, introduced in \cite{costa2014age}. In recent years, it has been further investigated in the literature \cite{costa2016on, abd2019aPAoI, ino2019agen, Najm2018status, Bedewy2021Low-Power, Yang2022Edge}. In \cite{abd2019aPAoI}, the role of unmanned aerial vehicles (UAV) as mobile relays to minimize the average PAoI for a source-destination pair is explored. An optimization problem is formulated to jointly optimize the UAV's flight trajectory as well as energy and service time allocations for packet transmissions. Peak AoI is not an age penalty functional~\cite{yates2021age}, however, the work in \cite{ino2019agen} considers AoI in single-server queues with different service disciplines. The results indicate that the stationary distribution of AoI can be expressed in terms of the stationary distributions of the system delay and PAoI, with the average PAoI serving as a fundamental characterization of the age process. In \cite{Najm2018status}, the expression for the PAoI in each source of a multi-source M/G/1/1 queue is considered. Ref.~\cite{Bedewy2021Low-Power} investigates a network of battery-constrained nodes, aiming to design a sleep-wake strategy that minimizes the weighted average PAoI while satisfying energy constraints. Ref.~\cite{Yang2022Edge} discusses the real-time performance of a cache-enabled network with Peak AoI as a freshness metric. They propose a random caching framework and demonstrate its superiority over content-based and uniform caching strategies in minimizing the PAoI. Nevertheless, the PAoI performance in MEC systems remains unexplored.

\subsection{Main Results}

In this paper, we aim to minimize the average peak AoI in MEC systems while considering transmission delay and computation time. The motivation of choosing this metric is similar to~\cite{costa2014age}\cite{costa2016on}. On the one hand, the average PAoI is more meaningful for scenarios that require reducing AoI violation probability, rather than the average AoI. On the other hand, optimizing average AoI in systems with generate-at-will sources is usually challenging~\cite{sun2017update}, and only heuristic strategies can be proposed in continuous-time systems~\cite{zhu2022online}. Even in a cloud computing system without considering transmission time, minimizing average AoI for general service time is still an open problem \cite{arafa2019timely}. Typically, PAoI minimization leads to well-structured solutions~\cite{cham2021min}. We address the problem of minimizing the average PAoI in both preemptive and non-preemptive systems, where the transmission time assumed i.i.d. and so does computation time. Specifically, we consider a generate-at-will source, which can generate updates at any time as long as the channel is available\footnote{In general systems, the source has a queue to store new updates when the channel is unavailable. However, generating a new update when the channel is unavailable is obviously sub-optimal when considering the freshness of information. Therefore, in this paper, we consider the source has no queue and only generates updates when the channel is available.}, and seek the optimal scheduling policy to minimize the average PAoI. The key contributions of this paper are:

\begin{itemize}
  \item We formulate the problem of minimizing the average PAoI, which is compatible with both systems. The feasible policy space consists of all causal policies, where the control decisions depend on the history and current information of the system. 
  \item In the non-preemptive system, we first prove that the optimal policy is a continuous working policy (Lemma~\ref{lem:maxZ}). Then, we prove that a randomized threshold policy is optimal (Theorem~\ref{The:WOP_RT}). Subsequently, we prove that a fixed threshold policy is optimal (Theorem~\ref{The:WOP_FT}), and derive the conditions that the optimal threshold must satisfy (Proposition~\ref{pro:fis}). Finally, we discuss two special cases. With exponentially distributed computation time, the optimal threshold is given in closed-form. With exponentially distributed transmission time, a piece-wise bisection method is proposed to find the optimal threshold.
  \item In the preemptive system, we first show that the continuous working policy is still optimal (Corollary~\ref{cor:maxZ_wp}). We then prove that a stationary deterministic policy is optimal in this case (Theorem~\ref{The:WP_SD}). Thus, the original problem can be reformulated as a non-convex functional optimization problem. We prove that the transmission-aware threshold policy is optimal when the computation time follows an exponential distribution (Theorem~\ref{The:optiamlC}). We also consider the fixed threshold policy, and propose an iterative algorithm to find the optimal threshold.
  \item In our numerical study, we evaluate the performance of the proposed policies using exponentially distributed computation time and two transmission time distributions, namely exponential and Pareto, to investigate the minimum PAoI and optimal thresholds. Our experimental results not only confirm the analytical findings, but also demonstrate that: 1) preemptive systems are not always superior to non-preemptive systems even when the memoryless exponential distribution is assumed, and 2) the optimal threshold increases in preemptive systems but decreases in non-preemptive systems as the ratio of mean transmission time to mean computation time increases.
\end{itemize}


The remainder of the paper is structured as follows. In Section \ref{sec:sys}, we introduce the MEC system model and formulate the problem of minimizing the average PAoI. Section \ref{sec:nonpreemption} focuses on the optimal policy in the non-preemptive system, while Section \ref{sec:preemption} explores the minimizing problem in the preemptive system. In Section \ref{sec:num}, we present numerical results, and in Section \ref{sec:con}, we draw conclusions based on our findings.
  
\section{System Model And Problem Formulation}\label{sec:sys}
\subsection{System Model}\label{subsec:systemmodel}

We consider a status update system as shown in Fig.~\ref{fig:system}, where a source generates real-time status packets and sends them to an edge server though a channel. The source generates and submits packets to the channel at times $S_0,S_1,\dots$. Each packet arrives at the edge server after a random transmission time, and then is served for a random computation time, and finally is delivered to the destination. Packets arriving while the edge server is busy can be stored in a queue. The source is aware of the idle/busy state of the channel and the edge server through the feedback link between the source and the edge server. For example, the edge server sends 0 when a new packet arrives and 1 when the packet finishes computing. We assume the transmission time of the feedback signal is negligible.

Intuitively, discarding updates in the queue when a new update arrives at the server can lead to a lower AoI. However, this makes a policy of sending as much as possible potentially optimal, which results in a large number of packets being dropped. Therefore, in this paper, we consider an FCFS queue, and the source can only submit new data when the queue is empty. Thus, a unit-sized queue is considered in our model.

In order to fully utilize transmission resources and hence potentially reduce the AoI, the source can submit a new packet when the channel is idle and the queue is empty, even if the edge server is busy. Unfortunately, the server may still be occupied when a new packet arrives, since the transmission time is random. In this case, we consider two systems: one with preemption and one without preemption as follows:
\begin{enumerate}
  \item \textit{System without service preemption.} In the non-preemptive service case, newly arrived packets are stored in the unit-sized queue if the edge server is busy, and the packets wait until the server becomes idle. During the waiting period, the source does not generate new packets. Thus, all packets are successfully delivered.
  \item \textit{System with service preemption.} In the preemptive service case, the edge server always serves the newly arrived packet and discards the old one. Therefore, the queue is always empty. No packets need to wait in the queue, but some packets will be dropped due to preemption.
\end{enumerate}

Suppose the update $i$ is generated and submitted at time $S_i$, its transmission time is $T_i$, and its computation time is $C_i$. Assume the transmission times of the packets are i.i.d with a positive finite mean $0<E[T]<\infty$, and make the same assumptions for the computation times. Let $F_{T}(\cdot),f_{T}(\cdot)$ and $F_{C}(\cdot),f_{C}(\cdot)$ denote the cumulative distribution function and probability density function of the transmission time and computation time, respectively. Since both $T$ and $C$ are positive, we have $f_T(t) = f_C(t) = 0 $ for all $t \le 0$. Denote $W_i$ as the waiting time of packet $i$ in the queue. Hence, the packet $i$ is delivered at time 
\begin{align}
  D_i=S_i+T_i+W_i+C_i\label{equ:dk}
\end{align}
or is possibly dropped (in preemptive service case). We assign $D_i=\infty$, if packet $i$ is dropped due to preemption. We assume that packet $0$ is submitted at time $S_0 =-T_0-C_0$ and is delivered at $D_0=0$. The inter-generation time between packet $i+1$ and packet $i$, denoted by $Z_i$, is given by 
\begin{align}
  Z_i=S_{i+1}-S_i.\label{equ:zk}
\end{align}
Since the source can only submit packets when the channel is idle and the queue is empty, we have 
\begin{align}
  Z_i\ge T_i+W_i.
\end{align}
A scheduling policy can be written as $\pi=\{Z_i,i\ge 1\}$.

At time $t$, the AoI at the destination, denote by $\Delta(t)$, is given by
\begin{equation}
\Delta(t)=t-\mathop{\max}_{i\in \mathbb{N}}\{S_i|D_i\le t\}.\label{def:age}
\end{equation}
The initial AoI is $\Delta(0)=T_0+C_0$. Let $k$ denote the $k$-th successfully delivered packet. To calculate the PAoI, we denote: (i) $i_k$ to be the transmission index of the $k$-th successfully delivered packet, (ii) 
\begin{align}
  X_k=S_{i_{k}}-S_{i_{k-1}}\label{equ:xk}
\end{align}
to be the inter-generation time between two consecutive successfully delivered packets, (iii) 
\begin{align}
  A_k=D_{i_{k}}-S_{i_k}\label{equ:Ak}
\end{align}
to be the system time, or the time spent by the $k$-th successful packet in the channel and the edge server. Then, the PAoI can be calculated by 
\begin{align}
  P_k=X_k+A_k.\label{equ:Pk}
\end{align}

The above definitions are compatible for both preemptive and non-preemptive systems. In the system without service preemption, all updates are successfully delivered, thus, $i_k=k$ for all $k=1,2,\cdots$. The $k$-th packet's waiting time $W_k=\max\{0,D_{k-1}-S_{k}-T_{k}\}$. In the system with service preemption, packets never enter the queue and hence $W_i=0$ for all $i=1,2,\cdots$. The evolution
of the instantaneous age for these two scenarios is given in
Fig.~\ref{fig:Age_curve}. 

\textbf{Remark} Although PAoI is a more manageable age metric compared to AoI, it remains challenging in the context of joint optimization of transmission and computation. In contrast to treating transmission and computation processes as a single process \cite{sun2017update}, joint transmission-computation allows for overlapping system time of packets, i.e., transmitting new data while old data is being computed. On the one hand, this may reduce the inter-generation time, thus lowering the average peak AoI. On the other hand, it may lead to competition for edge server resources, increasing the system time and consequently raising the average peak age. Simply applying strategies from \cite{sun2017update} to this situation, that is, treating transmission and computation as a unified process, would overlook the advantages of joint transmission-computation optimization. Therefore, determining the generation time to achieve the joint optimization of average PAoI during busy edge server periods is our main challenge. In the non-preemptive system, this challenge is reflected in balancing reduced inter-generation time with increased queue time. In the preemptive system, this challenge manifests in balancing inter-generation time with the probability of successful delivery.

\begin{figure*}[!t]
  \begin{minipage}{1\linewidth}
      \centering
      \subfloat[Age of information in the non-preemptive system]{\includegraphics[width=.5\linewidth]{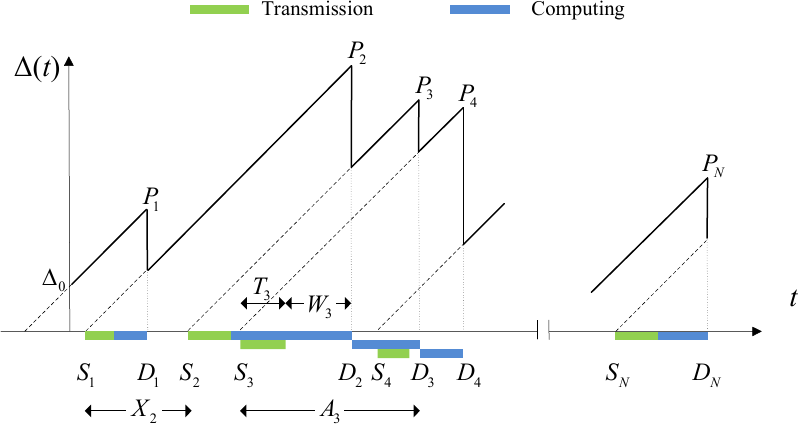}}
      \subfloat[Age of information in the  preemptive system]{\includegraphics[width=.5\linewidth]{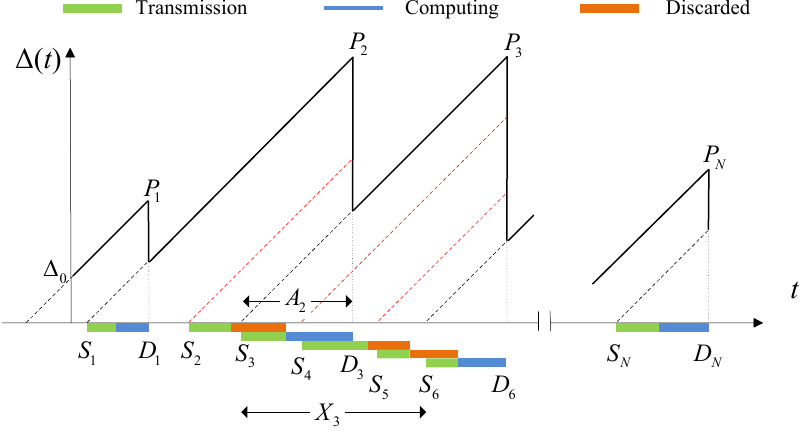}}
  \end{minipage}
  \caption{Age curve in two systems.}
  \label{fig:Age_curve}
\end{figure*}

\subsection{Problem Formulation}

Under a given scheduling policy $\pi$, the average PAoI is defined as
\begin{equation}
  \mathcal{P}_{\pi} = \lim_{K \to \infty} \frac{1}{K}\mathbb{E}_{\pi}\left[ \sum_{k=1}^{K}P_k \right].\label{def:PAoI}
\end{equation}
As we focus on the set of policies where the source can only submit packets when the channel is idle and the queue is empty, we define the set of all feasible scheduling policies as $\Pi = \{\pi = \{Z_i, i \ge 1\} | Z_i \ge T_i + W_i\}$,the problem of minimizing the average PAoI can be formulated as
\begin{align}
  \mathcal{P}^{*,\omega}=\min_{\pi \in \Pi}\mathcal{P}_{\pi}\label{def:prob},
\end{align}
where the superscript $\omega\equiv \text{wop}$ is the case without preemption and $\omega\equiv \text{wp}$ is the case with preemption. Our goal is to find the optimal policy in $\Pi$, denoted by $\pi^*$, that achieves the minimum average PAoI $\mathcal{P}^{*,\omega}$. In the following, we consider the optimal policy in the two case one by one.

\section{The Non-Preemption Case}\label{sec:nonpreemption}

In this section, we solve problem~\eqref{def:prob} in the case $\omega\equiv \text{wop}$. In this case, all packets is delivered successfully, hence we have $i_k=k$ for all $k=1,2,\cdots$ and $X_k=Z_{k-1}$. We replace $i_k$ with $k$ for simplicity. We solve problem~\eqref{def:prob} in the case $\omega\equiv \text{wop}$ in three steps: First, we prove that a \textit{continuously working} policy is optimal for this problem. Second, we prove that a \textit{randomized threshold} policy is optimal and third, each randomized threshold policy is outperformed by a \textit{fixed threshold} policy. Finally, we develop a low-complexity algorithm to find the optimal fixed threshold policy for special cases.

\subsection{The Optimality of Continuous Working Policies}

Based on definitions in Sec.~\ref{subsec:systemmodel}, we have $P_k=X_k+A_k=X_k+T_k+W_k+C_k$. The PAoI benefits from reducing $X_k$ and $W_k$. Intuitively, reducing the inter-generation time $X_k$ increases the waiting time $W_k$ and thus may degrade the PAoI performance. However, if an idle time exists between the delivery of the last packet $D_{k-1}$ and the generation of a new packet $S_{k}$, it is always beneficial to reduce this idle time by reducing $X_k$. We express this idea in the following lemma.
\begin{Lemma}\label{lem:maxZ}
  In the optimal policy for the case $\omega\equiv \text{wop}$, the source submits a new packet before or immediately when the old packet is delivered to the destination, which means $S_k\le D_{k-1}$ for all $k=1,2,\cdots$.
\end{Lemma}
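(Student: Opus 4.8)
The plan is to prove this by a sample-path interchange (perturbation) argument: I would show that any idle period between a delivery and the next generation can be removed without increasing the PAoI of any packet, while strictly decreasing it for the affected one. Concretely, I would argue by contradiction. Suppose an optimal policy $\pi$ produces a realization in which $S_k > D_{k-1}$ for some $k$, and let $\delta = S_k - D_{k-1} > 0$ denote the idle gap. Observe first that, because packet $k$ is generated only after packet $k-1$ has been delivered, it arrives at an empty server, so $W_k = \max\{0, D_{k-1}-S_k-T_k\} = 0$.

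I would then construct a competing policy $\pi'$ that agrees with $\pi$ on all transmission times $T_j$ and computation times $C_j$ (coupling on the exogenous randomness) and on the inter-generation times $Z_j$ for $j \ge k$, but advances the generation of packet $k$ and of all later packets by $\delta$; that is, $S_j' = S_j - \delta$ for $j \ge k$ and $S_j' = S_j$ for $j < k$. The first thing to verify is feasibility, i.e. $\pi' \in \Pi$. The only constraint that could be jeopardized is $Z_{k-1}' \ge T_{k-1} + W_{k-1}$, but $Z_{k-1}' = S_k' - S_{k-1} = D_{k-1} - S_{k-1} = T_{k-1} + W_{k-1} + C_{k-1} \ge T_{k-1} + W_{k-1}$, so $\pi'$ is feasible with slack $C_{k-1}$; all remaining constraints for $j \ge k$ are inherited since the $Z_j$ are unchanged.

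Next I would track the effect on each $P_j$. For $j < k$ nothing changes. For packet $k$, the server is still idle on arrival so $W_k' = W_k = 0$ and $A_k' = A_k$, while $X_k' = S_k' - S_{k-1} = X_k - \delta$, giving $P_k' = P_k - \delta$. The key computation is that the uniform shift preserves the downstream dynamics: since both $D_{j-1}$ and $S_j$ are advanced by exactly $\delta$ for $j > k$, the quantity $D_{j-1} - S_j - T_j$ is invariant, so every subsequent waiting time $W_j$, and hence every $A_j$, is unchanged; likewise $X_j = S_j - S_{j-1}$ is unchanged for $j > k$. Therefore $P_j' = P_j$ for all $j \ne k$ and $P_k' = P_k - \delta < P_k$. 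Summing over the first $K$ packets and taking expectations under the coupling yields $\mathcal{P}_{\pi'} \le \mathcal{P}_{\pi}$, with strict improvement whenever idle gaps occur with positive probability, contradicting the optimality of $\pi$. Hence the optimal policy satisfies $S_k \le D_{k-1}$ for all $k$.

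The step I expect to be the main obstacle is making the downstream-invariance argument airtight in the stochastic, causal setting rather than for a single realization. I must check that $\pi'$ is a legitimate causal policy (the source learns the server frees at $D_{k-1}$, so generating at $D_{k-1}$ is admissible) and that the advance can be applied consistently to \emph{every} idle gap while respecting the unit-queue FCFS discipline. The cleanest route is to couple $\pi$ and $\pi'$ on the same realizations of $\{T_j, C_j\}$ so that the per-packet inequalities $P_j' \le P_j$ hold pathwise, and then pass to the long-run average; the finite-horizon sum $\sum_{k=1}^{K}P_k$ causes no difficulty because advancing generation times does not alter which packets are the first $K$ to be delivered.
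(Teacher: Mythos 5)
Your proposal is correct and takes essentially the same route as the paper's own proof: a contradiction argument that shifts the generation times of packet $k$ and all subsequent packets earlier by the idle gap, observes that waiting times (hence system times $A_j$) and all other inter-generation times are preserved, so only $P_k$ strictly decreases. Your version is in fact slightly more careful than the paper's, since you explicitly verify feasibility of the shifted policy ($Z_{k-1}' = T_{k-1}+W_{k-1}+C_{k-1}$) and spell out the downstream invariance of $D_{j-1}-S_j-T_j$, both of which the paper asserts without computation.
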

\begin{proof}
  The details are provided in Appendix~\ref{proof:lem:maxZ}
\end{proof}

Lemma~\ref{lem:maxZ} shows that idle time in the system will increase the average PAoI. This conclusion shows the difference between PAoI and AoI as indicators of information freshness: appropriate idle time may reduce the average AoI~\cite{sun2017update} but only increases the PAoI. By substituting \eqref{equ:dk} and \eqref{equ:zk} into $S_k\le D_{k-1}$, we have the following result.
\begin{corollary}\label{cor:maxZ}
  The optimal policy for the problem~\eqref{def:prob} in the case $\omega\equiv \text{wop}$ satisfies $Z_k\le T_k+W_k+C_k$.
\end{corollary}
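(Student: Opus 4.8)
The plan is to obtain the stated inequality directly from Lemma~\ref{lem:maxZ} by substitution, since the corollary is nothing more than an algebraic restatement of that lemma in terms of the inter-generation time rather than the submission and delivery times. First I would invoke Lemma~\ref{lem:maxZ} at index $k+1$: the lemma guarantees that, under the optimal policy, packet $k+1$ is submitted no later than the delivery of packet $k$, i.e., $S_{k+1}\le D_k$.

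Next I would expand $D_k$ using its definition \eqref{equ:dk}, namely $D_k=S_k+T_k+W_k+C_k$. Plugging this into $S_{k+1}\le D_k$ yields $S_{k+1}\le S_k+T_k+W_k+C_k$. Finally, recalling the definition of the inter-generation time $Z_k=S_{k+1}-S_k$ from \eqref{equ:zk}, I would subtract $S_k$ from both sides to conclude $Z_k=S_{k+1}-S_k\le T_k+W_k+C_k$, which is exactly the claimed bound.

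The main obstacle is essentially nonexistent: all of the genuine analytical content — that eliminating idle time between the delivery $D_{k-1}$ and the next submission $S_k$ can only reduce the average PAoI — is carried in the proof of Lemma~\ref{lem:maxZ}. This corollary is then a one-line consequence obtained by rewriting the constraint $S_{k+1}\le D_k$ in the $Z$-variables. The only point requiring care is the index shift: the lemma must be applied at $k+1$ (not $k$) so that the subscripts on $T$, $W$, and $C$ align with the index $k$ of the inter-generation time $Z_k$, after which the substitution is immediate.
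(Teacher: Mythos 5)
Your proposal is correct and matches the paper's own derivation: the paper likewise obtains the corollary by substituting \eqref{equ:dk} and \eqref{equ:zk} into the inequality $S_k\le D_{k-1}$ from Lemma~\ref{lem:maxZ}, which is exactly your substitution up to the (equivalent) choice of index convention. Your explicit handling of the index shift is a fair point of care, but the argument is the same one-line algebraic consequence the paper intends.
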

Based on Corollary~\ref{cor:maxZ}, we focus on the following special  policies.
\begin{definition}[Continuous Working Policy]
  A policy is said to be a Continuous Working Policy, if it satisfies $Z_k\le T_k+W_k+C_k$ for all $k$.
\end{definition}

Let $\Pi_{\text{CW}}$ ($\Pi_{\text{CW}}\subseteq  \Pi$) denote the set of continuous working policies. In the following, we focus on a special policy in $\Pi_{\text{CW}}$.

\subsection{The Optimality of Randomized Threshold Policies}\label{subsec:RT}
A policy $\pi\in \Pi_{\text{CW}}$ is said to be a \emph{randomized threshold} policy if it chooses a threshold $\Theta_k$ from an invariant distribution $f_{\Theta}(\cdot)$ and then determines $Z_k$ according to $Z_k=T_k+W_k+\min\{\Theta_k,C_k\}$. This means that the new update is generated at most $\Theta_k$ times later than the previous one starts to be computed. We use $\Pi_{\text{RT}}$ ($\Pi_{\text{RT}}\subseteq \Pi_{\text{CW}}$) to represent the set of randomized threshold policies. The first key result of this paper is the following theorem.
\begin{theorem}\label{The:WOP_RT}
  Given the distribution of transmission time and computation time, a randomized threshold policy in $\Pi_{\text{RT}}$ is optimal for Problem \eqref{def:prob} in the case $\omega\equiv \text{wop}$.
\end{theorem}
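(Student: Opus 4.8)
The plan is to use Corollary~\ref{cor:maxZ} to restrict attention to the class $\Pi_{\text{CW}}$ and then to show that, within this class, every causal policy is equivalent to one that commits to a threshold on the elapsed computation time. First I would argue from the information structure of the system. After packet $k$ begins being computed, at time $S_k+T_k+W_k$, and before it is delivered at $D_k$, the only event the source can observe is the completion of that computation; no further information arrives during this interval. Hence any causal rule for choosing the submission instant of packet $k+1$ inside the admissible window $[S_k+T_k+W_k,\,D_k]$ (the upper end being forced by Lemma~\ref{lem:maxZ}) is equivalent to selecting a possibly history-dependent and randomized threshold $\Theta_k\ge 0$ and submitting at the epoch $\delta_k=\min\{\Theta_k,C_k\}$ measured from the start of computation: if $\Theta_k\le C_k$ the source submits at $\Theta_k$, and otherwise continuous working forces submission at completion, i.e.\ at $C_k$. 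Since $\Theta_k$ is measurable with respect to the history up to the start of computation of packet $k$ together with exogenous randomization, while $(C_k,T_{k+1})$ is fresh i.i.d.\ randomness, $\Theta_k$ is independent of $(C_k,T_{k+1})$.

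Next I would exploit a separability that is special to the peak metric. Starting from $P_k=Z_{k-1}+T_k+W_k+C_k$ and substituting $Z_{k-1}=T_{k-1}+W_{k-1}+\min\{\Theta_{k-1},C_{k-1}\}$ together with the identity $W_k=(C_{k-1}-\Theta_{k-1}-T_k)^+$, which follows from $D_{k-1}-S_k-T_k=(C_{k-1}-\Theta_{k-1})^+-T_k$, the cumulative peak age regroups after an index shift into
\[
\sum_{k=1}^{K}P_k=\sum_{k=1}^{K}\big(2T_k+C_k\big)+\sum_{k=0}^{K-1}\Big(\min\{\Theta_k,C_k\}+2\,(C_k-\Theta_k-T_{k+1})^+\Big)+O(1).
\]
The weight two on the waiting-time term is the crucial point: each $W_{k+1}=(C_k-\Theta_k-T_{k+1})^+$ enters both $P_{k+1}$ through $A_{k+1}$ and $P_{k+2}$ through $Z_{k+1}=X_{k+2}$, whereas $\min\{\Theta_k,C_k\}$ enters only $P_{k+1}$ through $X_{k+1}$. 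The terms in the first sum do not depend on any threshold, and each summand of the second sum depends on a single decision variable $\Theta_k$ and fresh randomness. Dividing by $K$, taking expectations, using the independence established above, and noting that the boundary $O(1)$ terms vanish in the limit, I obtain
\[
\mathcal{P}_\pi=2\mathbb{E}[T]+\mathbb{E}[C]+\lim_{K\to\infty}\frac{1}{K}\sum_{k=0}^{K-1}\mathbb{E}\big[\psi(\Theta_k)\big],
\]
where $\psi(\theta):=\mathbb{E}\big[\min\{\theta,C\}+2\,(C-\theta-T)^+\big]$ with $C,T$ independent and distributed as the computation and transmission times.

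The conclusion is then immediate. For a randomized threshold policy with invariant law $f_\Theta$ each $\Theta_k$ has the same law, so $\mathbb{E}[\psi(\Theta_k)]=\int\psi\,df_\Theta$ for every $k$ and $\mathcal{P}_\pi=2\mathbb{E}[T]+\mathbb{E}[C]+\int\psi\,df_\Theta$. For an arbitrary feasible policy the long-run average of $\mathbb{E}[\psi(\Theta_k)]$ is bounded below by $\inf_\theta\psi(\theta)$, a value attainable within $\Pi_{\text{RT}}$ (indeed by a degenerate law, which anticipates Theorem~\ref{The:WOP_FT}). Therefore the infimum of $\mathcal{P}_\pi$ over all feasible policies coincides with its infimum over $\Pi_{\text{RT}}$, which is exactly the claim.

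I expect the main obstacle to be the rigorous reduction to threshold form in the first step: one must argue carefully, using the filtration generated by the observable events, that no causal rule can outperform committing to a threshold and that the induced $\Theta_k$ is genuinely independent of $(C_k,T_{k+1})$. A secondary technical point is controlling the $O(1)$ boundary terms and justifying the interchange of limit and expectation, which I would handle by passing through $\liminf$ and $\limsup$ and a domination argument based on $\mathbb{E}[C]<\infty$ and $\mathbb{E}[T]<\infty$. An alternative, more abstract route would recast the problem as an average-cost semi-Markov decision process and invoke the existence of optimal stationary randomized policies; I prefer the explicit decomposition because it exposes the per-stage cost $\psi$ and directly sets up the fixed-threshold refinement.
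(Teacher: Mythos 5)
Your proposal is correct and follows essentially the same route as the paper's own proof: both parametrize continuous-working policies by history-dependent randomized thresholds $\Theta_k$ capped at $C_k$, telescope the peak-age sum into $2\mathbb{E}[T]+\mathbb{E}[C]$ plus the per-packet cost $\mathbb{E}\left[\min\{\Theta_k,C_k\}+2W_{k+1}\right]$ with vanishing boundary terms, and lower-bound the long-run average by the minimum per-stage cost, which an invariant (indeed degenerate) threshold law attains. The differences are cosmetic: you regroup on sample paths before taking expectations and make the information-structure reduction explicit, whereas the paper conditions on $(T_k,W_k,\mathcal{H}_k)$ and minimizes the conditional cost $\widehat{R}_k$ over all conditions and histories.
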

\begin{proof}
  The details are provided in Appendix~\ref{proof:The:WOP_RT}.
\end{proof}

Based on Theorem~\ref{The:WOP_RT}, we can reformulate Problem~\eqref{def:prob} as follows:
\begin{align*}
  P_{k+1}&=X_{k+1}+A_{k+1}=Z_{k}+A_{k+1}\\
  &=T_{k}+W_{k}+\min\{\Theta,C_{k}\}+T_{k+1}+W_{k+1}+C_{k+1}
\end{align*}
where $W_k=\max\{0,D_{k-1}-S_k-T_k\}=\max\{0,C_{k-1}-\Theta_{k-1}-T_k\}$. Since $T_k$'s ,$C_k$'s and $\Theta_k$'s are i.i.d., we have that $W_k$'s are also i.i.d. Thus, by dropping the subscripts from all random variables and replacing numerical average with expectation, Problem~\eqref{def:prob} can be reformulated as the following functional optimization problem:
\begin{align}
  \mathcal{P}^{*,\text{wop}}=\min_{f_{\Theta}}&\mathbb{E}\left[\min\{\Theta,C\}+2W\right]+2\mathbb{E}\left[T\right]+\mathbb{E}\left[C\right]\label{def:prob2}\\
  &s.t.\quad W=\max\{0,\widetilde{C}-\Theta-T\}.\nonumber
\end{align}
where $\widetilde{C}$ represents the computation time of the last packet. 

Theorem~\ref{The:WOP_RT} shows that in the non-preemptive system, we can optimize average PAoI without considering historical informations, such as transmission times, computation times of old packets and inter-generation times. Next, we consider the optimal policy in $\Pi_{\text{RT}}$.

\subsection{The Optimality of Fixed Threshold Policies}\label{subdec:FT}
A policy $\pi\in \Pi_{\text{RT}}$ is said to be a \emph{fixed threshold} policy, if $\Theta$ degrades to a constant value, denoted by $\theta$. In this case, $Z_k=T_k+W_k+\min\{\theta,C_k\}$ for all $k=1,2,\cdots$. Let $\Pi_{\text{FT}}$ ($\Pi_{\text{FT}}\subseteq \Pi_{\text{RT}}$) denote the set of fixed threshold policies.
\begin{theorem}\label{The:WOP_FT}
  Given the distribution of transmission time and computation time, a fixed threshold policy in $\Pi_{\text{FT}}$ is optimal for Problem~\eqref{def:prob2} in the case $\omega\equiv \text{wop}$.
\end{theorem}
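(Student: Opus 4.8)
The plan is to show that, once restricted to randomized threshold policies via Theorem~\ref{The:WOP_RT}, the objective in Problem~\eqref{def:prob2} is an affine functional of the threshold distribution $f_\Theta$, so that its minimum over all distributions is attained at an extreme point of the probability simplex, namely a point mass. A point mass is exactly a fixed threshold, and this immediately yields the optimality of $\Pi_{\text{FT}}$. The work is therefore to make the dependence of the stationary cost on $f_\Theta$ explicit and to recognize its linearity.

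First I would isolate how a single threshold realization enters the long-run average cost. Writing $P_k=T_{k-1}+W_{k-1}+\min\{\Theta_{k-1},C_{k-1}\}+T_k+W_k+C_k$ with $W_k=\max\{0,C_{k-1}-\Theta_{k-1}-T_k\}$, one sees that the realization $\Theta_{k-1}$ contributes in exactly two places, the term $\min\{\Theta_{k-1},C_{k-1}\}$ and the next waiting time $W_k$, and both of these also share the \emph{same} computation time $C_{k-1}$, while $T_k$ is independent of the pair $(\Theta_{k-1},C_{k-1})$. Since thresholds are drawn i.i.d. from $f_\Theta$ and independently of all $T_i,C_i$, I would condition on $\Theta=\theta$ and average over the computation time $C$ of that packet and the transmission time $T$ of the next packet to define the deterministic function
\[ h(\theta)=\mathbb{E}_{C,T}\big[\min\{\theta,C\}+2\max\{0,C-\theta-T\}\big], \]
which does \emph{not} depend on $f_\Theta$. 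Taking the stationary time-average of $P_k$ and collecting the remaining stationary terms, the objective of Problem~\eqref{def:prob2} equals $\mathbb{E}_{\Theta}[h(\Theta)]+2\mathbb{E}[T]+\mathbb{E}[C]$, where $2\mathbb{E}[T]+\mathbb{E}[C]$ is a policy-independent constant.

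Finally, since $\mathbb{E}_{\Theta}[h(\Theta)]$ is linear in the measure $f_\Theta$, the minimization reduces to $\min_{f_\Theta}\mathbb{E}_{\Theta}[h(\Theta)]$, and for any distribution one has $\mathbb{E}_{\Theta}[h(\Theta)]\ge\inf_\theta h(\theta)$ with equality when $f_\Theta$ concentrates on a minimizer $\theta^*\in\arg\min_\theta h(\theta)$; equivalently, a linear functional over the convex set of probability distributions is minimized at an extreme point, i.e.\ a point mass. Placing all mass at $\theta^*$ is precisely a fixed threshold policy, so $\Pi_{\text{FT}}$ attains $\mathcal{P}^{*,\text{wop}}$ and the theorem follows (the resulting optimality conditions on $\theta^*$ being what Proposition~\ref{pro:fis} will record). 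The step I expect to require the most care is the bookkeeping in the middle paragraph: one must verify that each threshold realization influences the stationary cost only through $h$ evaluated at that same realization, so that the objective genuinely decouples as $\mathbb{E}_\Theta[h(\Theta)]$ with $h$ independent of the randomization. A secondary point is to guarantee $\arg\min_\theta h(\theta)$ is nonempty; I would argue $h$ is continuous on $[0,\infty]$, use the finiteness of $\mathbb{E}[T]$ and $\mathbb{E}[C]$ for integrability, and inspect the limits $\theta\to0$ and $\theta\to\infty$, noting that if the infimum were approached only in a limit a minimizing sequence of point masses would still show that no randomized policy can outperform the fixed-threshold infimum.
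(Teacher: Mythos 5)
Your proposal is correct and follows essentially the same route as the paper's own proof: the paper likewise writes the objective of Problem~\eqref{def:prob2} as $2\mathbb{E}[T]+\mathbb{E}[C]+\int_0^\infty f_\Theta(\theta)\,h(\theta)\,d\theta$ (with $h(\theta)$ exactly your per-threshold cost, written out as nested integrals), lower bounds it by $\min_\theta h(\theta)$ using linearity in $f_\Theta$, and then observes that the degenerate distribution $\Pr(\Theta'=\theta^*)=1$ attains the bound, i.e.\ a fixed threshold policy is optimal. Your extra care about attainment of the infimum (allowing $\theta^*=\infty$ or a minimizing sequence of point masses) is a minor refinement the paper glosses over by permitting thresholds in $[0,\infty)\cup\{\infty\}$, not a different argument.
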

\begin{proof}
  The details are provided in Appendix~\ref{proof:The:WOP_FT}.
\end{proof}
Theorem~\ref{The:WOP_FT} shows that in the non-preemptive system, the source can apply a simple policy to minimize the average PAoI: when a packet starts to be processed, the source sends the new packet after waiting for a constant time $\theta$ or after the data is processed. Thus, it is critical to determine the value of the threshold $\theta$, which is studied in the following.


Denote $\mathcal{P}^{*,\text{wop}}(\theta)$ as the average PAoI in the fixed threshold policy with parameter $\theta$. We have
\begin{align}
  \mathcal{P}^{*,\text{wop}}(\theta)=\mathbb{E}\left[\min\{\theta,C\}+2W\right]+2\mathbb{E}\left[T\right]+\mathbb{E}\left[C\right]\label{def:Ptheta},
\end{align}
where
\begin{align}
  &\mathbb{E}\left[\min\{\theta,C\}\right]=\theta\int_{\theta}^{\infty}f_C(x)  \,dx+\int_{0}^{\theta}f_C(x)x  \,dx,\label{equ:EC}\\
  &\mathbb{E}\left[W\right]=2\int_{0}^{\infty}f_T(x)\int_{x+\theta}^{\infty} f_C(y)(y-\theta-x) \,dy   \,dx.\label{equ:EW}
\end{align}
Based on Theorems \ref{The:WOP_RT} and \ref{The:WOP_FT}, the Problem~\eqref{def:prob} can be reformulated as
\begin{align}
  \mathcal{P}^{*,\text{wop}}=\min_{{\theta}\in [0,\infty)\cup \{\infty\} }\mathcal{P}^{*,\text{wop}}(\theta).\label{def:prob3}
\end{align}

Assuming  that $f_C$ is continuous\footnote{For the case of non-continuous or discrete distribution, a similar analysis can be launched by introducing the  Dirac impulse function.}, Problem~\eqref{def:prob3} is a one-dimensional optimization of a continuous, differentiable function. Thus, the minimum point is either a boundary point ($0$ or $\infty$) or a local minimum. In particular,  two special fixed threshold policies are defined: if $\theta=0$, the source sends a new packet when the old packet begins to be computed; if $\theta=\infty$, the source sends a new packet when the computing process of the old one is finished. By substituting $\theta=0$ and $\theta=\infty$ into \eqref{equ:EC} and \eqref{equ:EW}, we have
\begin{align}
  \mathcal{P}^{*,\text{wop}}(0)&=2\int_{0}^{\infty}f_T(x)\int_{x}^{\infty} f_C(y)(y-x) \,dy\,dx\nonumber\\
  &\quad +2\mathbb{E}[T]+\mathbb{E}[C]\\
  \mathcal{P}^{*,\text{wop}}(\infty)&=2\mathbb{E}[T]+2\mathbb{E}[C]
\end{align}

There may exist multiple local minima, and they all satisfy the following condition.

\begin{proposition}\label{pro:fis}
  The local minimum $\theta^{\dagger}$ of Problem~\eqref{def:prob3} satisfies
  \begin{align}
    2\mathbb{E}_T\left[F_C(x+\theta^{\dagger})\right]=F_C(\theta^{\dagger})+1,\label{equ_fi}
  \end{align}
  and
  \begin{align}
    \mathbb{E}_T\left[f_C(T+\theta^{\dagger})\right]\ge \frac{f_C(\theta^{\dagger})}{2}.\label{equ_fi2}
  \end{align}
\end{proposition}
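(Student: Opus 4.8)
The plan is to treat Problem~\eqref{def:prob3} as the unconstrained minimization of the scalar function $\mathcal{P}^{*,\text{wop}}(\theta)$ in \eqref{def:Ptheta}, and to obtain \eqref{equ_fi} and \eqref{equ_fi2} as the standard first- and second-order necessary conditions at an interior local minimum $\theta^{\dagger}$. Since the term $2\mathbb{E}[T]+\mathbb{E}[C]$ is constant in $\theta$, it suffices to differentiate the two $\theta$-dependent pieces $\mathbb{E}[\min\{\theta,C\}]$ and $2\mathbb{E}[W]$ given by \eqref{equ:EC} and \eqref{equ:EW}. The assumed continuity of $f_C$ guarantees that $\mathcal{P}^{*,\text{wop}}(\theta)$ is $C^{2}$, so the interior-extremum conditions apply.

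First I would differentiate $\mathbb{E}[\min\{\theta,C\}]=\theta(1-F_C(\theta))+\int_0^\theta x f_C(x)\,dx$. The two $\theta f_C(\theta)$ contributions cancel, leaving the clean expression $\frac{d}{d\theta}\mathbb{E}[\min\{\theta,C\}]=1-F_C(\theta)$. Next I would differentiate the waiting-time contribution using Leibniz's rule on the inner integral $\int_{x+\theta}^{\infty} f_C(y)(y-\theta-x)\,dy$. The key simplification is that the moving-limit boundary term vanishes, because the integrand $(y-\theta-x)$ is zero at the lower limit $y=x+\theta$; hence only the $\partial/\partial\theta$ of the integrand survives and the inner derivative equals $-\int_{x+\theta}^{\infty}f_C(y)\,dy=-(1-F_C(x+\theta))$. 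Integrating against $f_T$ then gives $\frac{d}{d\theta}\mathbb{E}[W]=\mathbb{E}_T[F_C(T+\theta)]-1$, where passing the derivative through the outer $T$-integral is justified by dominated convergence under the finite-mean hypotheses. Combining the two pieces with their coefficients yields $\frac{d}{d\theta}\mathcal{P}^{*,\text{wop}}(\theta)=2\mathbb{E}_T[F_C(T+\theta)]-F_C(\theta)-1$, and setting this to zero at $\theta^{\dagger}$ produces exactly \eqref{equ_fi}.

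For the second-order condition I would differentiate once more, obtaining $\frac{d^2}{d\theta^2}\mathcal{P}^{*,\text{wop}}(\theta)=2\mathbb{E}_T[f_C(T+\theta)]-f_C(\theta)$. The necessary condition for $\theta^{\dagger}$ to be a local minimum of a $C^{2}$ function is that this second derivative be nonnegative there, which rearranges to $\mathbb{E}_T[f_C(T+\theta^{\dagger})]\ge f_C(\theta^{\dagger})/2$, i.e.\ \eqref{equ_fi2}. The calculus is otherwise routine; the only step that requires genuine care is the Leibniz differentiation of $\mathbb{E}[W]$, where one must notice that the boundary term drops out and separately verify the validity of differentiating under the $f_T$-expectation. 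Everything else follows directly from the stationarity and nonnegative-curvature conditions for an interior extremum.
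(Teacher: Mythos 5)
Your proposal is correct and follows exactly the route the paper takes: the paper's proof simply states that \eqref{equ_fi} and \eqref{equ_fi2} follow from the first-order condition $\frac{d\mathcal{P}^{*,\text{wop}}(\theta)}{d\theta}=0$ and the second-order condition $\frac{d^{2}\mathcal{P}^{*,\text{wop}}(\theta)}{d\theta^{2}}\ge 0$, and your Leibniz-rule computation (with the vanishing boundary term and the resulting derivative $2\mathbb{E}_T[F_C(T+\theta)]-F_C(\theta)-1$) is precisely the calculation the paper leaves implicit, matching the $P'(\theta)$ it later uses in the exponential special case.
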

\begin{proof}
  The equations are directly derived based on the first-order condition $\frac{d \mathcal{P}^{*,\text{wop}}(\theta)}{d \theta}=0$ and the second-order condition $\frac{d^2 \mathcal{P}^{*,\text{wop}}(\theta)}{d\theta^2}\ge 0$.
\end{proof}

For any given distributions of $T$ and $C$, the local minimum $\theta^{\dagger}$ can be found by solving \eqref{equ_fi} and checking if \eqref{equ_fi2} is satisfied. The optimal threshold $\theta^*$ is the value among $0,\theta^{\dagger},\infty$ that achieves the minimum average PAoI $\mathcal{P}^{*,\text{wop}}=\min\{\mathcal{P}^{*,\text{wop}}(0),\mathcal{P}^{*,\text{wop}}(\theta^{\dagger}),\mathcal{P}^{*,\text{wop}}(\infty)\}$.

\subsection{Special Cases}\label{subsec:specialcase}

In the following, we focus on two special cases: 1) when the computation time follows an exponential distribution, the optimal policy $\theta^*$ is always either $0$ or $\infty$; 2) when the transmission time follows an exponential distribution, we can calculate $\theta^{*}$ using a low complexity algorithm developed based on equations \eqref{equ_fi} and \eqref{equ_fi2}.


\subsubsection{Exponentially Distributed Computation Time}

If the computation time follows an exponential distribution with parameter $\mu$, which means $f_C(x)=\mu e^{-\mu x}, x > 0$. From \eqref{equ:EC} and \eqref{equ:EW}, we have $\mathcal{P}^{*,\text{wop}}(\theta)$ is first-order differentiable. Denote $P'(\theta)$ as the first derivative of $\mathcal{P}^{*,\text{wop}}(\theta)$, then we have
\begin{align*}
  P'(\theta)&=\int_{\theta}^{\infty}f_C(x)  \,dx-2\int_{0}^{\infty}f_T(x)\int_{x+\theta}^{\infty} f_C(y) \,dy   \,dx\\
  &=e^{-\mu \theta}(1-2\int_{0}^{\infty}f_T(x)e^{-\mu x}\,dx)\\
  &=e^{-\mu \theta}(1-2 \mathcal{L}_{\mu}),
\end{align*}
where $\mathcal{L}_{\mu}$ is the Laplace transform of the transmission time distribution. It can be easily seen that $P'(\theta) \neq 0$ for any $\theta$ if $\mathcal{L}_{\mu} \neq 1/2$. Therefore, the minimum must be achieved at the boundaries. In particular, $\theta^*=0$ if $\mathcal{L}_{\mu}\le \frac{1}{2}$ and $\theta^*=\infty$ if $\mathcal{L}_{\mu}> \frac{1}{2}$.

\subsubsection{Exponentially Distributed Transmission Time}

Assume that the transmission time follows an exponential distribution with parameter $\lambda$, which means $f_T(x)=\lambda e^{-\lambda x}, x > 0$. We denote the second derivative of $\mathcal{P}^{*,\text{wop}}$ by $P''(\theta)$. In general, there is no closed-form expression for  $\theta^{*}$ for a continuous computation time distribution. However, we can first numerically calculate $\theta^{\dagger}$ and then compare $\mathcal{P}^{*,\text{wop}}(\theta^{\dagger}),\mathcal{P}^{*,\text{wop}}(0)$, and $\mathcal{P}^{*,\text{wop}}(\infty)$ to obtain $\theta^*$. 

To calculate $\theta^{\dagger}$, we need to find all $\theta\in (0,\infty)$ that satisfy \eqref{equ_fi} and \eqref{equ_fi2}. To achieve this, we define 
\begin{align}
  P''_{z}(\theta)=\lambda(1-F_C(\theta))-f_C(\theta).\label{equ:pz}
\end{align}
Then, we have the following lemma.
\begin{Lemma}\label{Lem:expPz}
  When the transmission time follows an exponential distribution, for any $\theta$ that satisfies \eqref{equ_fi}, we have
    \begin{align}
      P''(\theta)=P''_{z}(\theta).
    \end{align}
\end{Lemma}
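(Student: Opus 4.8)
The plan is to compute the second derivative $P''(\theta)$ of the objective \eqref{def:Ptheta} in closed form, and then show that under the exponential transmission assumption, together with the stationarity condition \eqref{equ_fi}, it collapses exactly to $P''_z(\theta)$ in \eqref{equ:pz}.

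First I would differentiate \eqref{equ:EC} and \eqref{equ:EW} twice. By the Leibniz rule, $\frac{d}{d\theta}\mathbb{E}[\min\{\theta,C\}] = 1 - F_C(\theta)$, whose derivative is $-f_C(\theta)$; for the waiting-time term the boundary contribution at the lower limit $y=x+\theta$ vanishes because the integrand $f_C(y)(y-\theta-x)$ is zero there, leaving first derivative $2\big(\mathbb{E}_T[F_C(T+\theta)]-1\big)$ (which is precisely the source of \eqref{equ_fi}) and second derivative $2\mathbb{E}_T[f_C(T+\theta)]$. Collecting terms gives the distribution-free identity $P''(\theta) = 2\mathbb{E}_T[f_C(T+\theta)] - f_C(\theta)$, valid for any transmission-time distribution.

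The crux is then to re-express $\mathbb{E}_T[f_C(T+\theta)]$ in terms of $\mathbb{E}_T[F_C(T+\theta)]$ using $f_T(x)=\lambda e^{-\lambda x}$. Changing variables $u=x+\theta$ turns both expectations into $\lambda e^{\lambda\theta}\int_\theta^\infty e^{-\lambda u}(\cdot)\,du$. Integrating the $F_C$ version by parts (with antiderivative $-\frac{1}{\lambda}e^{-\lambda u}$ and $dv=f_C(u)\,du$), the upper boundary term vanishes since $e^{-\lambda u}F_C(u)\to 0$, while the lower boundary term produces $F_C(\theta)$, yielding the key identity
\begin{align*}
  \mathbb{E}_T[F_C(T+\theta)] = F_C(\theta) + \tfrac{1}{\lambda}\,\mathbb{E}_T[f_C(T+\theta)],
\end{align*}
equivalently $\mathbb{E}_T[f_C(T+\theta)] = \lambda\big(\mathbb{E}_T[F_C(T+\theta)] - F_C(\theta)\big)$. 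This is where the memoryless/exponential structure is indispensable, and I expect it to be the main obstacle, mostly in tracking the boundary terms correctly and justifying convergence at infinity.

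Finally I would substitute this identity into $P''(\theta) = 2\mathbb{E}_T[f_C(T+\theta)] - f_C(\theta)$ to obtain $P''(\theta) = 2\lambda\big(\mathbb{E}_T[F_C(T+\theta)] - F_C(\theta)\big) - f_C(\theta)$, and then invoke \eqref{equ_fi} in the form $2\mathbb{E}_T[F_C(T+\theta)] = 1 + F_C(\theta)$. The $F_C(\theta)$ terms combine as $\lambda(1+F_C(\theta)) - 2\lambda F_C(\theta) = \lambda(1-F_C(\theta))$, so $P''(\theta) = \lambda(1-F_C(\theta)) - f_C(\theta) = P''_z(\theta)$, which completes the argument. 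It is worth stressing that the identity of the third paragraph holds for every $\theta$ under exponential transmission; only the final substitution requires $\theta$ to satisfy \eqref{equ_fi}, which is exactly the hypothesis of the lemma.
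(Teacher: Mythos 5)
Your proposal is correct and takes essentially the same route as the paper: both establish the closed form $P''(\theta)=2\,\mathbb{E}_T[f_C(T+\theta)]-f_C(\theta)$ and then use the exponential density to rewrite $\mathbb{E}_T[F_C(T+\theta)]$ in terms of $F_C(\theta)$ and $\mathbb{E}_T[f_C(T+\theta)]$ before invoking \eqref{equ_fi}. The only difference is mechanical---you obtain the key identity $\mathbb{E}_T[F_C(T+\theta)]=F_C(\theta)+\tfrac{1}{\lambda}\mathbb{E}_T[f_C(T+\theta)]$ by integration by parts and keep it valid for all $\theta$, whereas the paper substitutes \eqref{equ_fi} first and swaps the order of integration---and yours is in fact the tidier write-up, since the paper's intermediate lines contain two compensating slips (a mis-stated interchange of integrals and a dropped $F_C(\theta)$ term) that happen to cancel in the final conclusion.
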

\begin{proof} 
  The details are provided in Appendix~\ref{proof:Lem:expPz}.
\end{proof}

Since $P_z''(\theta) = 0$ is generally easier to solve compared with~\eqref{equ_fi}, we can calculate the local minimum based on $P''_z(\theta)$ as shown in the following proposition.

\begin{proposition}\label{Lem:pz}
  When the transmission time follows an exponential distribution, suppose $0\le \theta_1 <\theta_2<\infty$, we have
  \begin{itemize}
    \item If $P''_z(\theta)>0$ for all $\theta_1<\theta<\theta_2$, there is at most one $\theta\in (\theta_1,\theta_2)$ that satisfies \eqref{equ_fi}, and if such a $\theta$ exists, it also satisfies \eqref{equ_fi2}.
    \item If $P''_z  (\theta)<0$ for all $\theta_1<\theta<\theta_2$, there is no $\theta\in [\theta_1,\theta_2]$ that satisfies \eqref{equ_fi}.
  \end{itemize} 
\end{proposition}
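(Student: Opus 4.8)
The plan is to recast the two first- and second-order conditions \eqref{equ_fi} and \eqref{equ_fi2} as statements about a single auxiliary function produced by an integrating factor, so that the sign information carried by $P''_z$ becomes monotonicity information. First I would record a derivative identity that refines Lemma~\ref{Lem:expPz}. Writing $P'(\theta)=(1-F_C(\theta))-2\big(1-\mathbb{E}_T[F_C(T+\theta)]\big)$ and $P''(\theta)=2\mathbb{E}_T[f_C(T+\theta)]-f_C(\theta)$, the exponential form $f_T(x)=\lambda e^{-\lambda x}$ yields, by the same integration by parts that proves Lemma~\ref{Lem:expPz}, the identity $P''(\theta)-\lambda P'(\theta)=P''_z(\theta)$ valid for \emph{all} $\theta$, not merely at solutions of \eqref{equ_fi}. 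Setting $\phi(\theta):=e^{-\lambda\theta}P'(\theta)$ then gives $\phi'(\theta)=e^{-\lambda\theta}P''_z(\theta)$. Since $e^{-\lambda\theta}>0$, the zeros of $\phi$ are exactly the $\theta$ satisfying \eqref{equ_fi}, and the sign of $\phi'$ equals the sign of $P''_z$.

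With this, the first bullet is immediate. On $(\theta_1,\theta_2)$ the hypothesis $P''_z>0$ forces $\phi'>0$, so $\phi$ is continuous and strictly increasing there; a strictly monotone function has at most one zero, which establishes the ``at most one $\theta$'' claim. If such a zero $\theta^{\dagger}$ exists, then $P'(\theta^{\dagger})=0$, so $\theta^\dagger$ satisfies \eqref{equ_fi}, and Lemma~\ref{Lem:expPz} gives $P''(\theta^{\dagger})=P''_z(\theta^{\dagger})>0$, which is precisely \eqref{equ_fi2}. Thus the unique stationary point in a $P''_z>0$ region is automatically a local minimum.

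For the second bullet the same device shows $\phi$ is strictly \emph{decreasing} on $(\theta_1,\theta_2)$, but strict monotonicity by itself does not forbid a zero: a decreasing $\phi$ with $\phi(\theta_1)>0>\phi(\theta_2)$ would cross zero once, and by Lemma~\ref{Lem:expPz} such a crossing would be a local \emph{maximum} of $\mathcal{P}^{*,\text{wop}}$. Hence the crux is to show that $\phi$ keeps a constant sign on the closed interval, i.e. to pin down the sign of $\phi$ at one endpoint. I would do this through global boundary behaviour rather than local data: since $P'(\theta)\to 0$ as $\theta\to\infty$ we have $\phi(\theta)\to 0$, so integrating $\phi'$ from $\theta$ to $\infty$ gives the representation $\phi(\theta)=-\int_{\theta}^{\infty}e^{-\lambda s}P''_z(s)\,ds$. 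Reading $(\theta_1,\theta_2)$ as a maximal sign-interval of the (continuous) function $P''_z$ and propagating the sign of $\phi$ inward from $+\infty$ via this representation should fix the sign of $\phi(\theta_2)$, and strict monotonicity then forces $\phi$ to keep that sign across $[\theta_1,\theta_2]$, leaving no interior zero.

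I expect this last step to be the main obstacle, and it deserves the most care. The purely local hypothesis that $P''_z<0$ on $(\theta_1,\theta_2)$ is genuinely not enough to exclude a lone downward crossing of $\phi$; the argument must import the global boundary condition $\phi(+\infty)=0$ together with the sign pattern of $P''_z$ on $[\theta_2,\infty)$ through the integral representation. Everything else — the integrating-factor identity, the monotonicity of $\phi$, and the second-order check via Lemma~\ref{Lem:expPz} — is routine. The payoff is the structural conclusion that local minima of $\mathcal{P}^{*,\text{wop}}$ can only lie in intervals where $P''_z>0$, with at most one candidate per such interval, which is exactly what the piece-wise bisection built on the easier equation $P''_z(\theta)=0$ requires in order to locate $\theta^{\dagger}$.
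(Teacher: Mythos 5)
Your handling of the first bullet is correct, and the integrating-factor identity it rests on is genuine: integration by parts with $f_T(x)=\lambda e^{-\lambda x}$ gives $\mathbb{E}_T[F_C(T+\theta)]=F_C(\theta)+\tfrac{1}{\lambda}\mathbb{E}_T[f_C(T+\theta)]$, and substituting this into $P'(\theta)=2\mathbb{E}_T[F_C(T+\theta)]-F_C(\theta)-1$ and $P''(\theta)=2\mathbb{E}_T[f_C(T+\theta)]-f_C(\theta)$ yields $P''(\theta)-\lambda P'(\theta)=P''_z(\theta)$ for \emph{all} $\theta$, which strengthens Lemma~\ref{Lem:expPz}. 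The monotonicity of $\phi(\theta)=e^{-\lambda\theta}P'(\theta)$ then gives ``at most one solution of \eqref{equ_fi}, necessarily satisfying \eqref{equ_fi2}'' in one stroke. The paper instead argues by contradiction on two consecutive solutions of \eqref{equ_fi} (each a strict local minimum, so an intermediate critical point with $P''\le 0$ must exist, contradicting Lemma~\ref{Lem:expPz}); your argument is the integrated, and cleaner, form of the same fact.

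The gap is in the second bullet, and it is not one that more care with the boundary condition at $+\infty$ can close: the statement you are trying to prove — that $P''_z<0$ on $(\theta_1,\theta_2)$ excludes \emph{any} solution of \eqref{equ_fi} in $[\theta_1,\theta_2]$ — is false for general continuous $f_C$. Writing $\phi(\theta)=e^{-\lambda\theta}\bigl(1-F_C(\theta)\bigr)\bigl(2\,\mathbb{E}[e^{-\lambda(C-\theta)}\mid C>\theta]-1\bigr)$, condition \eqref{equ_fi} says the conditional Laplace transform of the residual computation time equals $1/2$, while $P''_z(\theta)<0$ says the hazard rate of $C$ exceeds $\lambda$. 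These coexist: let the hazard rate of $C$ be (a smoothed version of) $2\lambda$ on a long interval $(\theta_1,\theta_2)$ and $\epsilon\ll\lambda$ beyond $\theta_2$. Deep inside the interval the conditional transform is near $2/3>1/2$; near $\theta_2$ most conditional mass survives into the heavy tail and the transform falls to about $\epsilon/(\epsilon+\lambda)<1/2$; by continuity it equals $1/2$ at some interior point, which satisfies \eqref{equ_fi} while $P''_z<0$ there. Exactly as you suspected, the sign of $\phi(\theta_2)=-\int_{\theta_2}^{\infty}e^{-\lambda s}P''_z(s)\,ds$ is controlled by the tail of $P''_z$ beyond $\theta_2$, which the hypothesis leaves free, so your sign-propagation step cannot be made to work. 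What is true — and is in fact all the paper's own appendix proves, since its contradiction in the second case invokes \eqref{equ_fi2} — is the weaker claim that no $\theta$ in such an interval satisfies \eqref{equ_fi} \emph{and} \eqref{equ_fi2}: at any solution of \eqref{equ_fi} there, Lemma~\ref{Lem:expPz} (or your identity) forces $P''(\theta)=P''_z(\theta)<0$, so the point is a local maximum and fails the second-order condition. That one-line observation is the intended content of the second bullet and is all that Algorithm~\ref{alg:optimal} needs, since the algorithm only skips $P''_z<0$ intervals when hunting for local minima; the literal ``no $\theta$ satisfies \eqref{equ_fi}'' phrasing overstates both what is provable and what is required.
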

\begin{proof}
  The details are provided in Appendix~\ref{proof:lempz}.
\end{proof}

Based on Proposition~\ref{Lem:pz}, we can calculate the local minimum as follows. Denote $\theta_1,\theta_2,\cdots$ as all zero points of $P''_z(\theta)$. These zero points divide $(0,\infty)$ into multiple sub-intervals. On each sub-interval, if $P''_z(\theta)>0$, we can use the bisection method to calculate $\theta$ which satisfies \eqref{equ_fi} and \eqref{equ_fi2}. Then, by comparing the corresponding average PAoI of these $\theta$'s, we can obtain $\theta^{\dagger}$, and finally obtain $\theta^*$. The procedure is shown in Algorithm~\ref{alg:optimal}.

\begin{algorithm}[!t]
  \caption{Piece-wise Bisection Method for Solving Problem~\eqref{def:prob3} for Exponentially Distributed Transmission Time}\label{alg:optimal}
  \begin{algorithmic}[1]
      \STATE $\textbf{given }\theta_1,\cdots,\theta_M$
      \STATE $\mathbb{S}=\emptyset$
      \STATE $\textbf{for }(l,r)=\{(0,\theta_1),\cdots,(\theta_{M},\infty)\}$
      \STATE \hspace{0.5cm}$\textbf{if }P''_z(\theta)>0,\theta\in (l,r)$
      \STATE \hspace{1cm} find $\theta$ by bisection method such that \eqref{equ_fi} holds
      \STATE \hspace{1cm} $\mathbb{S}=\mathbb{S}\cup \{\mathcal{P}^{*,\text{wop}}(\theta)\}$
      \STATE $\theta^{\dagger}={\arg\min}_{\theta}\mathbb{S}$
      \STATE $\theta^{*}={\arg\min}_{\theta}\{\mathcal{P}^{*,\text{wop}}(\theta^{\dagger}),\mathcal{P}^{*,\text{wop}}(0),\mathcal{P}^{*,\text{wop}}(\infty)\}$
      \STATE $\textbf{return }\theta^*$
  \end{algorithmic}
\end{algorithm}

\section{The Service Preemption Case}\label{sec:preemption}

In this section, we focus on Problem~\eqref{def:prob} in the case $\omega\equiv \text{wp}$. In this case, we have $W_i=0$ for all $i=1,2,\cdots$. We begin by proving that the continuous working policy is optimal in this scenario. Afterwards, we prove that the optimal policy is a \textit{stationary deterministic} policy. Lastly, we discuss two heuristic policies.

\subsection{The Optimality of Continuous Working Policies}

In the preemptive case, we have
\begin{align}
  P_k=&X_k+A_k=S_{i_k}-S_{i_{k-1}}+T_{i_k}+C_{i_k}\nonumber\\
  =&\sum_{j=i_{k-1}}^{i_k-1}(S_{j+1}-S_{j})+T_{i_k}+C_{i_k}\nonumber\\
  =&\sum_{j=i_{k-1}}^{i_k-1}Z_j+T_{i_k}+C_{i_k}.\label{equ:WP_Pk1}
\end{align}
Unlike the case there $\omega\equiv \text{wop}$, reducing the inter-generation time $Z_i$ does not increase $W_i$, but may increase preemption and then degrade the PAoI performance. However, reducing idle time between packet $i_k$'s delivery time $D_{i_k}$ and packet $(i_k+1)$'s submission time $S_{i_k+1}$ is still beneficial. The following corollary shows that the continuous working policy is also optimal in this case.
\begin{corollary}\label{cor:maxZ_wp}
  The optimal policy $\pi^*$ for Problem \eqref{def:prob} in the case $\omega\equiv \text{wp}$ is the continuous working policy, which satisfies $Z_i\le T_i+C_i$.
\end{corollary}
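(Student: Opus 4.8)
The plan is to prove the claim by a sample-path interchange argument by contradiction, in direct parallel to Lemma~\ref{lem:maxZ}, but tracking the preemption pattern in place of the queueing delay. Suppose an optimal policy $\pi^*$ violates the claim, so that on some sample path there is an index $i$ with $Z_i > T_i + C_i$. Since $W_i=0$ in the preemptive case, \eqref{equ:dk} gives $D_i = S_i + T_i + C_i$, and $S_{i+1} = S_i + Z_i > S_i + T_i + C_i = D_i$. Hence packet $i+1$ is submitted strictly after packet $i$ is delivered, so packet $i$ is necessarily delivered (not preempted) and a genuine idle interval of length $\delta := Z_i - T_i - C_i > 0$ separates $D_i$ from $S_{i+1}$, during which both the channel and the server sit empty.

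First I would construct an improved policy $\pi'$ that agrees with $\pi^*$ except that it replaces $Z_i$ by $T_i + C_i$, i.e. it submits the next packet exactly at the delivery instant $D_i$. This shifts $S_{i+1}$ and every later submission time earlier by $\delta$, while leaving $S_j$ for $j \le i$ and all realized values $T_j, C_j$ unchanged, and keeping every other inter-generation time $Z_j$ ($j \ne i$) at its original value. The modified decision is feasible, since $Z_i = T_i + C_i \ge T_i = T_i + W_i$, and causal, since the source learns $D_i$ (hence $T_i+C_i$) from the feedback link the moment packet $i$ is delivered.

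The central step is to show that $\pi'$ induces exactly the same set of delivered packets $\{i_k\}$ as $\pi^*$. A packet $j$ is preempted iff its successor reaches the server before it finishes, i.e. iff $S_{j+1}+T_{j+1} < S_j+T_j+C_j$, equivalently $Z_j + T_{j+1} < T_j + C_j$; this test depends only on $Z_j, T_j, T_{j+1}, C_j$. For every $j \ne i$ these quantities are untouched by the shift, so the delivery status of all such packets is preserved; and for $j=i$ the new $S_{i+1}=D_i$ still gives $S_{i+1}+T_{i+1} = D_i + T_{i+1} > D_i$, so packet $i$ remains delivered. With $\{i_k\}$ unchanged, expression \eqref{equ:WP_Pk1} shows that each peak $P_k = \sum_{j=i_{k-1}}^{i_k-1} Z_j + T_{i_k} + C_{i_k}$ is unchanged except for the single peak whose summation window contains the index $i$, which decreases by exactly $\delta$. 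Therefore $\pi'$ attains a strictly smaller PAoI on this sample path; since the improvement applies whenever an idle gap is present, an optimal policy can contain no such gap, and taking expectations contradicts the optimality of $\pi^*$. This forces $Z_i \le T_i + C_i$ for all $i$, which is the definition of a continuous working policy in the case $\omega \equiv \text{wp}$.

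The main obstacle I expect is the invariance of the preemption pattern: unlike the non-preemptive setting of Lemma~\ref{lem:maxZ}, where shrinking $Z_i$ can only inflate a waiting time that is itself being optimized, here shrinking $Z_i$ risks turning a delivered packet into a preempted one (or vice versa) and thereby scrambling the indices $\{i_k\}$ and the composition of the peaks. The argument above sidesteps this by shrinking $Z_i$ only down to $T_i + C_i$ — precisely the boundary at which packet $i$'s successor arrives no earlier than $D_i$ — so that no new preemption is created; verifying that this threshold choice leaves every preemption test invariant, and that the uniform leftward shift does not couple back into the tests for $j \ne i$, is the step that needs the most care.
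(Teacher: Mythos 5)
Your proof is correct and takes essentially the same route as the paper's: the paper proves this corollary by invoking the idle-gap interchange argument of Lemma~\ref{lem:maxZ} with $k$ replaced by $i_k$ and $W_i=0$, which is exactly the shift-by-$\delta$ construction you carry out. Your extra verification that the set of delivered packets $\{i_k\}$ is invariant under the shift (each preemption test depends only on $Z_j,T_j,T_{j+1},C_j$, none of which change for $j\neq i$, and shrinking $Z_i$ only down to $T_i+C_i$ creates no new preemption) is precisely the detail the paper leaves implicit, and you supply it correctly.
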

\begin{proof}
  The proof follows the same lines as the proof of Lemma~\ref{lem:maxZ}, except that $k$ is replaced by $i_k$ and $W_{i}=0$ for $i=1,2,\cdots$.
\end{proof}

Corollary~\ref{cor:maxZ_wp} demonstrates that the similarities between the preemptive system and the non-preemptive system: idle time will damage the average PAoI performance. In the following, we consider the optimal policy in $\Pi_{\text{CW}}$.

\subsection{The Optimality of Stationary Deterministic Policies}

A policy $\pi\in \Pi_{\text{CW}}$ is said to be a \emph{stationary deterministic} policy, if it chooses a threshold $\theta_i$ based on $T_i$, and then determines $Z_i$ according to $Z_i=T_i+\min\{\theta_i,C_i\}$. In this policy, $\theta_i$ is determined according to a function $g(T_i)$ that is invariant for all $i$. Let $\Pi_{\text{SD}}$ ($\Pi_{SD}\subseteq \Pi_{\text{CW}}$) denote the set of stationary deterministic policies. The following theorem shows that the optimal solution for Problem \eqref{def:prob} in this case is a stationary deterministic policy.

\begin{theorem}\label{The:WP_SD}
  Given the distribution of transmission time and computation time, a stationary deterministic policy in $\Pi_{\text{SD}}$ is optimal for Problem \eqref{def:prob} in the case $\omega\equiv \text{wp}$.
\end{theorem}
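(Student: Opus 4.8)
The plan is to recast Problem~\eqref{def:prob} for $\omega\equiv\text{wp}$ as a long-run average-cost Markov decision process and then exploit the fact that, once we restrict to continuous working policies, the per-stage dynamics become memoryless in an unusually strong sense. First, by Corollary~\ref{cor:maxZ_wp} I may restrict attention to $\Pi_{\text{CW}}$, so every admissible control has the form $Z_i=T_i+\min\{\theta_i,C_i\}$ for a threshold $\theta_i\ge 0$ chosen causally from the history $\mathcal H_i$ available when packet $i$ reaches the server; this history contains $T_i$, which the source learns from the feedback link. (That the general causal continuous working control must take this threshold shape follows from $Z_i-T_i\in[0,C_i]$.) A direct computation from \eqref{equ:WP_Pk1} shows that packet $i$ is successfully delivered if and only if $\theta_i+T_{i+1}\ge C_i$: delivery depends on the action $\theta_i$, the current computation time $C_i$, and the \emph{next} transmission time $T_{i+1}$, whereas $Z_i$ depends on $\theta_i$, $T_i$ and $C_i$.

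Next I would decompose the accumulated peak age over the first $K$ successful deliveries. Telescoping the inner sums in $P_k=\sum_{j=i_{k-1}}^{i_k-1}Z_j+T_{i_k}+C_{i_k}$ gives $\sum_{k=1}^K P_k=\sum_{j}Z_j+\sum_{\text{delivered }j}(T_j+C_j)$, where the first sum runs over all packets generated up to the $K$-th delivery. Writing $N$ for that number of generations and noting $K=\sum_{j\le N}\mathbf{1}\{j\text{ delivered}\}$, the objective \eqref{def:PAoI} becomes a ratio of two additive functionals,
\[
\mathcal P_\pi=\lim_{N\to\infty}\frac{\frac1N\sum_{j=1}^N\big[Z_j+(T_j+C_j)\mathbf{1}\{\text{del}_j\}\big]}{\frac1N\sum_{j=1}^N\mathbf{1}\{\text{del}_j\}},
\]
with $\text{del}_j=\{\theta_j+T_{j+1}\ge C_j\}$. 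I would then invoke a Dinkelbach / fractional-programming argument: letting $\beta^\star=\mathcal P^{*,\text{wp}}$, a policy is ratio-optimal if and only if it minimizes the long-run average of the single per-stage cost $Z_j+(T_j+C_j-\beta^\star)\mathbf{1}\{\text{del}_j\}$, and the optimal value of this auxiliary average-cost problem is $0$.

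The crucial structural observation, and the step that delivers stationarity, is that in this auxiliary MDP the natural state is $T_i$, yet the action $\theta_i$ has no influence at all on the state transition: the next state $T_{i+1}$ is an independent draw from $F_T$ regardless of $T_i$ and $\theta_i$, since the source always submits a fresh packet whose transmission time is i.i.d. Consequently the future-value term in the average-cost Bellman equation reduces to a constant $\mathbb E_{T'}[h(T')]$ independent of $(T_i,\theta_i)$, and the Bellman minimization collapses to a per-state, single-variable minimization of the conditional cost $\bar c_{\beta^\star}(t,\theta)=\mathbb E\big[Z_i+(T_i+C_i-\beta^\star)\mathbf{1}\{\text{del}_i\}\mid T_i=t,\theta_i=\theta\big]$ over $\theta\ge 0$, where the expectation integrates out $C_i\sim F_C$ and $T_{i+1}\sim F_T$. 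Setting $g(t)=\arg\min_{\theta\ge0}\bar c_{\beta^\star}(t,\theta)$ produces a control depending only on the current transmission time, i.e. a stationary deterministic policy in $\Pi_{\text{SD}}$, which by construction attains the optimum; note that $\bar c_{\beta^\star}(t,\cdot)$ genuinely depends on $t$, which is precisely why the optimal threshold is transmission-aware rather than constant.

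I expect the main obstacle to be the rigorous justification of the ratio representation and the Dinkelbach reduction, rather than the decoupling itself. Specifically, I would need to verify that the numerator and denominator long-run averages exist: the i.i.d. structure of $(T_i,C_i)$ under a stationary control makes the relevant renewal-reward / strong-law arguments routine, but for a general history-dependent control one must work with $\limsup/\liminf$ and use the positivity of the delivery rate $\mathbb E[\mathbf{1}\{\text{del}_j\}]>0$, which holds because the never-preempt control ($\theta_i\equiv\infty$) delivers every packet. I would also need existence of a measurable minimizer $g$ of $\bar c_{\beta^\star}(t,\cdot)$ and attainment of the value $\beta^\star$. The reward's dependence on the next state $T_{i+1}$ is handled cleanly by integrating it into $\bar c_{\beta^\star}$, since the transition law is action-independent.
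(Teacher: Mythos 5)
Your proposal is correct in substance, but it reaches Theorem~\ref{The:WP_SD} by a genuinely different route than the paper. The paper splits the claim into two separate steps: it first proves \emph{stationarity} by writing the average PAoI as the ratio in \eqref{equ:3a}--\eqref{equ:3d} and lower-bounding it by the minimum over all histories of the conditional ratio $\mathbb{E}_{T_i}[\widehat R_i]/\mathbb{E}_{T_i}[\widehat x_i]$ (a mediant-type inequality mirroring the proof of Theorem~\ref{The:WOP_RT}), then applying the minimizing conditional distribution to every packet; it then proves \emph{determinism} separately, by reformulating the problem within one epoch between successful deliveries as an infinite-horizon MDP with state $T^e_{r-1}$ and arguing via backward induction that randomization cannot beat the pointwise minimizer, using a truncation $J_{\pi,M+1}\approx 0$. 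You instead linearize the ratio up front with a Dinkelbach transformation at the optimal value $\beta^\star=\mathcal{P}^{*,\text{wp}}$ and then exploit the same structural fact the paper only uses in its second step --- that the transition law of the state $T_i$ is action-independent and i.i.d.\ --- to collapse the average-cost Bellman equation: the future-value term is a constant, so by the tower property the auxiliary average cost of \emph{any} causal policy is at least $\mathbb{E}_T\left[\min_{\theta}\bar c_{\beta^\star}(T,\theta)\right]$, which the policy $g(t)=\arg\min_{\theta}\bar c_{\beta^\star}(t,\theta)$ attains, yielding stationarity and determinism in one stroke. Notably, the paper does use exactly your fractional-programming device, but only for the one-dimensional fixed-threshold subproblem (Lemma~\ref{Lem:pc}, Algorithm~\ref{alg:itera}) and inside the proof of Theorem~\ref{The:optiamlC}; your argument shows it also subsumes the structural result itself, and it makes transparent why the optimal threshold must depend on the current transmission time. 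What the paper's longer route buys is that it never needs to presuppose the finiteness of $\beta^\star$ or the validity of the ratio-to-difference equivalence over history-dependent policies; the caveats you flag yourself (limsup/liminf care for general causal controls, strict positivity of the delivery rate, measurable selection of the argmin, which should be permitted to take the value $\theta=\infty$) are the price of your shortcut, but they are of the same order of informality as the paper's own treatment of \eqref{equ:3a}--\eqref{equ:3d} and of the epoch-MDP truncation, so your proof is acceptable as it stands.
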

\begin{proof}
  The details are provided in Appendix~\ref{proof:The:WP_SD}.
\end{proof}

According to Theorem \ref{The:WP_SD}, Problem \eqref{def:prob} can be reformulated in $\Pi_{\text{SD}}$ as follows.

Using \eqref{equ:WP_Pk1}, we obtain
\begin{align}
  P_k=&\sum_{j=i_{k-1}}^{i_k-1}Z_j+T_{i_k}+C_{i_k}\nonumber\\
  =&\sum_{j=i_{k-1}}^{i_k-1}(Z_j+(T_{j+1}+C_{j+1})\mathbf{1}_{\Omega_{j+1}}),\label{equ:Pk_}
\end{align}
where $\Omega_j:C_j\le \theta_j+T_{j+1}$ is the event that packet $j$ is successfully delivered, and $\mathbf{1}_{\Omega_j}$ is the indicator function of event $\Omega_j$. Under the stationary deterministic policy, $P_k$'s are i.i.d. Denote $M_k$ as the number of generated packets between two  successful packets, which follows the geometric distribution with $p=Pr(\Omega_i)$. Similar to the previous section, by dropping the subscripts of all variables, we have
\begin{align}
  \mathbb{E}[P]=&\mathbb{E}\left[\sum_{j=1}^{M}(Z_j+(T_{j+1}+C_{j+1})\mathbf{1}_{\Omega_{j+1}})\right]\label{_equ:Pa}\\
  =&\mathbb{E}\left[M\right]\mathbb{E}\left[Z+(T'+C')\mathbf{1}_{\Omega'}\right]\label{_equ:Pb}\\
  =&\mathbb{E}\left[M\right]\mathbb{E}\left[Z+(T+C)\mathbf{1}_{\Omega}\right]\label{_equ:Pc}\\
  =&\frac{\mathbb{E}\left[Z+(T+C)\mathbf{1}_{\Omega}\right]}{Pr(\Omega)},\label{_equ:Pd}
\end{align}
where \eqref{_equ:Pb} follows from Wald's Equation (\cite{ross1995stoch}, Chapter 6); \eqref{_equ:Pc} follows that $T_i$'s and $C_i$'s are i.i.d.; and \eqref{_equ:Pd} follows that the expectation of geometrically distributed random variable $M$ is $1/p$.

Problem~\eqref{def:prob} in this case can be reformulated as a  functional optimization problem as follows:
\begin{align}
  \mathcal{P}^{*,\text{wp}}=\min_{g:\theta=g(T)}&\frac{\mathbb{E}\left[T+\min\{\theta,C\}+(T+C)\mathbf{1}_{\Omega}\right]}{Pr(\Omega)}\label{def:probX},
\end{align}
where
\begin{align}
  &Pr(\Omega)=\int_{0}^{\infty} f_{T}(y)\int_{0}^{\infty} f_{T}(x)F_{C}(y+g(x)) \,dx \,dy,\label{equ:PO}\\
  &\mathbb{E}\left[(T+C)\mathbf{1}_{\Omega}\right]=\nonumber\\
  &\int_{0}^{\infty} f_{T}(s)\int_{0}^{\infty}f_T(x)\int_{0}^{s+g(x)} f_{C}(y)(x+y)\,dy \,dx  \,ds.\label{equ:ETC}
\end{align}


Previous studies on the functional optimization problem have mainly focused on the convex case, where the optimal policy generally has a threshold structure. However, Problem \eqref{def:probX} is non-convex and challenging to solve. Additionally, there is no evidence to suggest that the optimal solution has a specific structure, such as a threshold policy. In this paper, we present two heuristic solutions to tackle the problem.

\subsection{The Fixed Threshold Policy}\label{subsec:fixed}

\begin{algorithm}[!t]
  \caption{Iterative Algorithm for Obtaining $\theta^*_f$}\label{alg:itera}
  \begin{algorithmic}[1]
      \STATE $\textbf{given a sufficiently small tolerance } \delta \textbf{ and arbitrary } c$;
      \WHILE {not convergence}
      \STATE Solve the problem~\eqref{def:probX_const_lag} for a given $c$ and obtain the corresponding optimal threshold $\theta^*_c$;
      \IF {$-\delta\le p(c)\le \delta$}
      \STATE Convergence=true;
      \STATE $\textbf{return }\theta^*_f=\theta^*_c$
      \ELSE
      \STATE Update $c=\mathcal{P}_f^{*,\text{wp}}(\theta^*_c)$
      \ENDIF
      \ENDWHILE
  \end{algorithmic}
\end{algorithm}

Inspired by the optimal fixed threshold policy for the non-preemptive case, we consider the fixed threshold policy for the preemptive case. This makes sense because the fixed threshold policy is a special stationary deterministic policy. Under this policy, a new packet is submitted a fixed time after the transmission completion of the last packet, i.e., $Z=T+\min\{\theta,C\}$ and $\theta$ is a constant value. Let $\mathcal{P}^{*,\text{wp}}_{\text{f}}(\theta)$ denote the average PAoI and $\theta^*_f$ denote the optimal threshold in this case. Then, Problem~\eqref{def:probX} can be reformulated as follows:
\begin{align}
  \mathcal{P}^{*,\text{wp}}_{\text{f}}=&\min_{{\theta}\in [0,\infty)\cup \{\infty\} }\mathcal{P}^{*,\text{wp}}_{\text{f}}(\theta)\label{def:probX_const}
\end{align}

To make problem~\eqref{def:probX_const} more easier to solve, we introduce the following parameterized problem:
\begin{align}
  p(c)\triangleq \min_{{\theta}\in [0,\infty)\cup \{\infty\} } &\mathbb{E}\left[T+\min\{\theta,C\}+(T+C)\mathbf{1}_{\Omega}\right]\nonumber\\
  &-cPr(\Omega)\label{def:probX_const_lag}.
\end{align}

This approach was previously discussed in \cite{ref2}. We now present a restatement of the results in \cite{ref2} as the following lemma, and provide a proof for completeness.
\begin{Lemma}\label{Lem:pc}
  $p(c)$ is decreasing in $c$, and the optimal solution of problem~\eqref{def:probX_const} is given by $c^*$ that solves $p(c^*)=0$.
\end{Lemma}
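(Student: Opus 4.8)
The plan is to prove Lemma~\ref{Lem:pc} by exploiting the Dinkelbach-style fractional-programming structure of Problem~\eqref{def:probX_const}. Recall that the objective is a ratio $\mathcal{P}^{*,\text{wp}}_{\text{f}}(\theta)=N(\theta)/\Pr(\Omega)$, where I abbreviate the numerator as $N(\theta)=\mathbb{E}\left[T+\min\{\theta,C\}+(T+C)\mathbf{1}_{\Omega}\right]$ and note that both $N(\theta)$ and $\Pr(\Omega)$ are nonnegative, with $\Pr(\Omega)>0$ for any finite threshold policy. The parameterized problem defines $p(c)=\min_\theta\{N(\theta)-c\,\Pr(\Omega)\}$, so the entire argument rests on relating the sign of $p(c)$ to whether $c$ is above or below the optimal ratio value $\mathcal{P}^{*,\text{wp}}_{\text{f}}$.

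First I would establish monotonicity. For any fixed feasible $\theta$, the map $c\mapsto N(\theta)-c\,\Pr(\Omega)$ is affine and strictly decreasing in $c$ because $\Pr(\Omega)>0$. Since $p(c)$ is the pointwise infimum over $\theta$ of a family of decreasing functions of $c$, it is itself decreasing in $c$ (the infimum of decreasing functions is decreasing); I would state this as the first claim. Second, I would prove the characterization $p(c^*)=0\Leftrightarrow c^*=\mathcal{P}^{*,\text{wp}}_{\text{f}}$. The forward direction of the equivalence I would argue as follows: if $p(c^*)=0$, then $N(\theta)-c^*\Pr(\Omega)\ge 0$ for all $\theta$ with equality attained at some minimizer $\theta^*_{c^*}$. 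Dividing through by $\Pr(\Omega)>0$ gives $N(\theta)/\Pr(\Omega)\ge c^*$ for every $\theta$, hence $\mathcal{P}^{*,\text{wp}}_{\text{f}}(\theta)\ge c^*$ for all $\theta$, while equality at $\theta^*_{c^*}$ shows $\mathcal{P}^{*,\text{wp}}_{\text{f}}(\theta^*_{c^*})=c^*$; therefore $c^*$ is exactly the optimal value $\mathcal{P}^{*,\text{wp}}_{\text{f}}$ and $\theta^*_{c^*}$ is an optimizer of Problem~\eqref{def:probX_const}. For the reverse direction, setting $c^*=\mathcal{P}^{*,\text{wp}}_{\text{f}}$ and letting $\theta^*$ be an optimizer of the ratio, the definition of the optimum gives $N(\theta)/\Pr(\Omega)\ge c^*$ for all $\theta$, i.e. $N(\theta)-c^*\Pr(\Omega)\ge 0$, so $p(c^*)\ge 0$; and plugging in $\theta^*$ yields $N(\theta^*)-c^*\Pr(\Omega)=0$, so $p(c^*)=0$.

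To make the statement fully rigorous I would note that strict monotonicity guarantees the root $c^*$ of $p(c)=0$ is unique, which justifies the algorithmic use of bisection on $c$ in Algorithm~\ref{alg:itera}; I would also remark that the update rule $c=\mathcal{P}^{*,\text{wp}}_{\text{f}}(\theta^*_c)$ converges to $c^*$ monotonically, consistent with the Dinkelbach iteration. The main obstacle I anticipate is not the algebra but the existence of a minimizer: $p(c)$ is defined as an infimum over $\theta\in[0,\infty)\cup\{\infty\}$, so I must confirm the infimum is attained (or handle the boundary case $\theta=\infty$ where preemption never occurs and $\Pr(\Omega)=\Pr(C\le\infty)=1$) to assert equality at $\theta^*_{c^*}$ rather than merely $\ge$. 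I would address this by appealing to continuity of $N(\theta)$ and $\Pr(\Omega)$ in $\theta$ over the compactified domain $[0,\infty]$, so that the infimum is a minimum; the edge cases $\theta=0$ and $\theta=\infty$ are covered by the explicit closed forms already derived for the numerator and $\Pr(\Omega)$ in~\eqref{equ:PO} and~\eqref{equ:ETC}. With attainment secured, the chain of inequalities above becomes a chain of achievable equalities and the lemma follows.
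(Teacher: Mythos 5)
Your proposal is correct and follows essentially the same route as the paper's proof: the standard Dinkelbach argument establishing that $p(c)$ is decreasing (since $\Pr(\Omega)>0$ makes each $N(\theta)-c\Pr(\Omega)$ decreasing in $c$) and that the root of $p(c)=0$ equals the optimal value of the fractional problem~\eqref{def:probX_const}. You are in fact somewhat more thorough than the paper, which only argues the direction ``$p(c)=0$ implies the ratio equals $c$ at the minimizer'' and silently assumes the minimum over $\theta\in[0,\infty)\cup\{\infty\}$ is attained, whereas you prove both directions of the equivalence and explicitly justify attainment via continuity on the compactified domain.
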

\begin{proof}
  The details are provided in Appendix~\ref{proof:Lem:pc}
\end{proof}

Lemma~\ref{Lem:pc} demonstrates that for the considered optimization problem with a fractional form objective function, an equivalent optimization problem exists with a subtraction form objective function, i.e., $p(c)$, and both forms of the problem yield the same scheduling policy. An iterative algorithm with guaranteed convergence in \cite{ref2} can be employed to obtain $c^*$, which can be found in Algorithm~\ref{alg:itera}.

During the iteration, in order to obtain $\theta^*_f$, we need to
solve problem~\eqref{def:probX_const_lag}. For a certain value of $c$, the optimal threshold of problem~\eqref{def:probX_const_lag}, denoted by $\theta^*_c$, can be found similarly to Sec.~\ref{subdec:FT}. Specifically, based on \eqref{equ:EC},\eqref{equ:PO} and \eqref{equ:ETC}, the objective function of \eqref{def:probX_const_lag} is second-order differentiable when $f_c(x)$ is first-order differentiable, and then the local minimum $\theta_{c}^{\dagger}$ satisfies the first-order and second-order conditions shown in \eqref{equ:fi3} and \eqref{equ:fi4}. Once obtaining $\theta_{c}^{\dagger}$, we can calculate  $\theta^*_c$ as the one among $0,\theta^{\dagger}_c,\infty$ that achieves the minimum $\mathbb{E}\left[T+\min\{\theta,C\}+(T+C)\mathbf{1}_{\Omega}\right]-cPr(\Omega)$.

\begin{figure*}[!t]
  \normalsize
  \begin{align}
    &\int_{0}^{\infty} f_{T}(y)\int_{0}^{\infty}f_T(x)f_C(y+\theta)(x+y+\theta-c) \,dx  \,dy+\int_{0}^{\infty}f_T(x)\int_{\theta}^{\infty}f_C(y)\,dy \,dx=0\label{equ:fi3} \\
    &\int_{0}^{\infty} f_{T}(y)\int_{0}^{\infty}f_T(x)\left[f_C(y+\theta)+f_C'(y+\theta)(x+y+\theta-c)\right] \,dx \,dy\ge \int_{0}^{\infty}f_T(x)f_C(\theta) \,dx\label{equ:fi4}
  \end{align}
  \hrulefill
  \vspace*{2pt}
\end{figure*}

\subsection{Transmission-Aware Threshold Policy}

We now investigate a transmission-aware threshold policy, where the waiting time for a new packet submission  is inversely proportional to the transmission time of the last packet. Specifically, we set $g(T)=\max\{0,\beta-T\}$, where $\beta$ is a threshold value. Then, we have $Z=\min\{T+C,\max\{\beta,T\}\}$. Based on the definition of $\Pi_{\text{SD}}$, we know that this policy is also a special stationary deterministic policy.

Let $\mathcal{P}^{*,\text{wp}}_{\text{t}}(\beta)$ denote the average PAoI under the transmission-aware threshold policy and $\beta^*_t$ denote the optimal solution in this case. We can reformulate Problem~\eqref{def:probX} as a one-dimensional fractional optimization problem as follows:
\begin{align}
  &\min_{{\beta}\in [0,\infty)\cup \{\infty\} }\mathcal{P}^{*,\text{wp}}_{\text{t}}(\beta)\label{def:probX_thres}.
\end{align}
We can solve the above problem in a manner similar to that in Sec. IV-C.

In the preemptive system, the transmission-aware threshold policy generally outperforms the fixed threshold policy, unlike in  the non-preemptive system. In particular, when the computation time follows an exponential distribution, the transmission-aware threshold policy is the optimal policy for Problem~\eqref{def:probX}, as stated in the following theorem:


\begin{theorem}\label{The:optiamlC}
  When the computation time is exponentially distributed, the optimal solution for Problem~\eqref{def:probX} is a transmission-aware threshold policy, which is given by
  \begin{align}
    g(x)=\max\{0,\gamma -x\}
  \end{align}
  with threshold given by
  \begin{align}
    \gamma =\frac{\mathcal{L}_{\mu}+\mu \mathcal{P}^{*,\text{wp}}\mathcal{L}_{\mu}-\mu\mathcal{M}_{\mu}-2}{\mu\mathcal{L}_{\mu}}
  \end{align}
  where $\mathcal{L}_{\mu}=\int_{0}^{\infty}f_T(x)e^{-\mu x}\,dx,\mathcal{M}_{\mu}=\int_{0}^{\infty}xf_T(x)e^{-\mu x}\,dx$ and $\mu$ is the parameter of exponential distribution.
\end{theorem}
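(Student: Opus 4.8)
The plan is to combine the Dinkelbach-type reduction already used for the fixed threshold problem with a pointwise minimization that the exponential computation time makes possible. First I would apply the parametric transformation of Lemma~\ref{Lem:pc} not to the one-dimensional problem \eqref{def:probX_const} but to the full functional problem \eqref{def:probX}: minimizing the fractional objective $\mathbb{E}[T+\min\{g(T),C\}+(T+C)\mathbf{1}_\Omega]/Pr(\Omega)$ is equivalent to finding the function $g$ and the scalar $c$ for which $g$ minimizes $J_c(g)\triangleq \mathbb{E}[T+\min\{g(T),C\}+(T+C)\mathbf{1}_\Omega]-c\,Pr(\Omega)$ and the minimized value is zero, in which case $c=\mathcal{P}^{*,\text{wp}}$. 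The argument of Lemma~\ref{Lem:pc} (monotonicity of the optimal value in $c$) carries over verbatim since it does not use the scalar nature of the decision variable, so this equivalence is immediate.

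Next I would substitute $f_C(y)=\mu e^{-\mu y}$ and $F_C(y)=1-e^{-\mu y}$ into \eqref{equ:PO} and \eqref{equ:ETC} and integrate out both the computation time and the \emph{next} transmission time $T'$ that appears in the event $\Omega:\,C\le g(T)+T'$. The crucial consequence of exponentiality is the factorization $\mathbb{E}_{T'}[e^{-\mu(g(x)+T')}]=e^{-\mu g(x)}\mathcal{L}_{\mu}$, and likewise the first-moment version producing $\mathcal{M}_{\mu}$; these collapse the double transmission integrals into closed forms in $e^{-\mu g(x)}$. After this reduction $J_c(g)$ becomes a single integral $\int_{0}^{\infty} f_T(x)\,\phi_c(x,g(x))\,dx$ in which the per-packet cost $\phi_c(x,\theta)$ depends on the policy only through the value $\theta=g(x)$ at that same $x$. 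I would emphasize that no coupling survives between $g$ evaluated at different arguments: the threshold $g(T')$ of the following packet never enters the current packet's success event, so the decoupling is exact.

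Because the objective is now an expectation of a per-$x$ cost with no constraint linking distinct $x$, I would minimize pointwise, choosing for each $x$ the $\theta\in[0,\infty)\cup\{\infty\}$ that minimizes $\phi_c(x,\theta)$. Setting $\partial\phi_c/\partial\theta=0$ and dividing out the strictly positive factor $e^{-\mu\theta}$ leaves a relation that is \emph{affine} in $x$ and $\theta$ with equal coefficients, i.e.\ of the form $x+\theta=\gamma$ for a constant $\gamma$ independent of $x$. Hence the unconstrained stationary point is $\theta=\gamma-x$, and after imposing $\theta\ge 0$ the pointwise minimizer is $g(x)=\max\{0,\gamma-x\}$; I would confirm it is a minimum rather than a maximum or boundary solution by checking the sign of $\partial^2\phi_c/\partial\theta^2$ and comparing against the endpoints $\theta=0$ and $\theta=\infty$. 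This is exactly the transmission-aware threshold structure claimed.

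Finally I would pin down $\gamma$ by enforcing $c=\mathcal{P}^{*,\text{wp}}$ in the first-order relation, which expresses $\gamma$ linearly in $\mathcal{L}_{\mu}$, $\mathcal{M}_{\mu}$, $\mu$ and $\mathcal{P}^{*,\text{wp}}$ and reproduces the stated closed form. The main obstacle I anticipate is the bookkeeping in the second step: carrying the integration of $(T+C)\mathbf{1}_\Omega$ over $C$ and $T'$ without error, and verifying that the cross terms organize into the $x+\theta=\gamma$ form, are where a sign or factor slip would be fatal. The accompanying subtlety is that $\gamma$ is defined self-referentially through the optimal value $\mathcal{P}^{*,\text{wp}}$, which is legitimate precisely because the Dinkelbach parameter $c$ converges to that optimal value.
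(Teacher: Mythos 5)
Your proposal is correct, and its skeleton matches the paper's proof: both apply the Dinkelbach parametrization of Lemma~\ref{Lem:pc} to the full functional problem~\eqref{def:probX}, both use exponentiality of $C$ to reduce every term to closed forms in $\int_0^\infty f_T(x)e^{-\mu g(x)}\,dx$, $\mathcal{L}_\mu$, and $\mathcal{M}_\mu$, and both end at the same affine first-order condition $x+g(x)=\gamma$ with $c=\mathcal{P}^{*,\text{wp}}$ substituted at the Dinkelbach root. The genuine difference is the final optimization step. The paper handles the constraint $g(x)\ge 0$ with calculus-of-variations machinery: it forms a Lagrangian with a multiplier function $u(x)$, imposes complementary slackness $u(x)g(x)=0$, and extracts $g(x)=\max\{0,\gamma-x\}$ from three KKT cases. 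You instead observe that the reduced objective decouples exactly as $\int_0^\infty f_T(x)\,\phi_c\bigl(x,g(x)\bigr)\,dx$ with $\phi_c(x,\theta)=e^{-\mu\theta}\bigl(c\mathcal{L}_\mu-\mathcal{M}_\mu-\tfrac{2}{\mu}-\mathcal{L}_\mu(\theta+x)\bigr)$ up to additive constants, and minimize pointwise; since $\partial_\theta\phi_c=\mu\mathcal{L}_\mu e^{-\mu\theta}(\theta+x-\gamma)$, the map $\theta\mapsto\phi_c(x,\theta)$ strictly decreases on $[0,\gamma-x)$ and increases thereafter, so projecting onto $\theta\ge 0$ gives $g(x)=\max\{0,\gamma-x\}$ as the global pointwise minimizer. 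Your route buys rigor and simplicity: it never invokes stationarity/KKT conditions in an infinite-dimensional, non-convex setting (which the paper uses without formal justification), and your monotonicity computation certifies a global, not merely local, minimum for each $x$; what the paper's Lagrangian formulation buys is a template that would still be usable if the terms did not decouple across $x$. One caveat applies to both arguments equally: the existence of a root $c^*$ of the parametric value function (continuity in $c$ plus a sign change) is asserted rather than proved, so your extension of Lemma~\ref{Lem:pc} to function-valued decisions, while correct in its monotonicity part, should state that existence step explicitly.
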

\begin{proof}
  The details are provided in Appendix~\ref{proof:optimalC}.
\end{proof}

It is noteworthy that the optimality of the transmission-aware policy is generally difficult to prove. However, numerical results in the next section will show that the transmission-aware policy usually performs better than the fixed threshold policy in the preemptive system.

\section{Numerical Results}\label{sec:num}

In this section, we analyze the performance of optimal policies for both systems under the assumption that the computation time follows an exponential distribution, and the transmission time follows exponential and Pareto distributions. We set $\mathbb{E}[T+C]=1$ and vary the ratio between $\mathbb{E}[T]$ and $\mathbb{E}[C]$ in our simulations. To establish a baseline for comparison, we adopt the mean-threshold policy \cite{cham2021min}, which employs the fixed threshold policy and uses the mean of the computation time as the threshold. Additionally, we observed the performance of the policies proposed in this paper in terms of AoI and compared them with other reference policies.


\begin{figure}[!t]
  \centering
  \includegraphics[width=0.9\linewidth]{./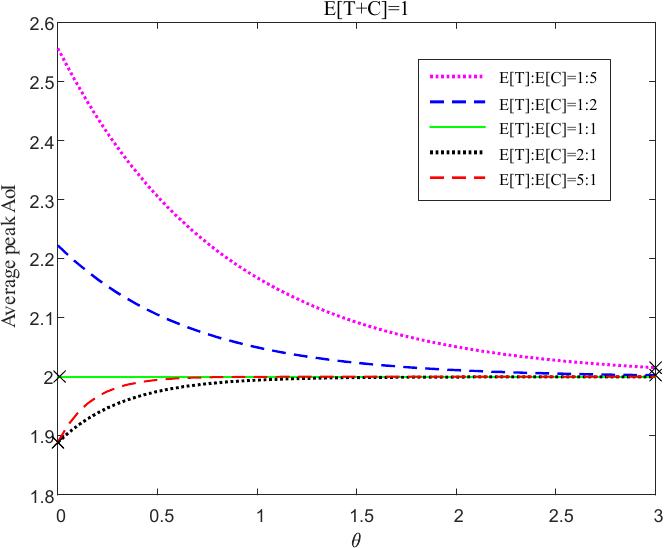}
  \caption{Average PAoI vs. $\theta$ under the fixed threshold policy with exponential transmission and computation time for different $\mathbb{E}[T]:\mathbb{E}[C]$ in the non-preemptive system.} \label{fig:exp16}
\end{figure}

\begin{figure}[!t]
  \centering
  \includegraphics[width=0.9\linewidth]{./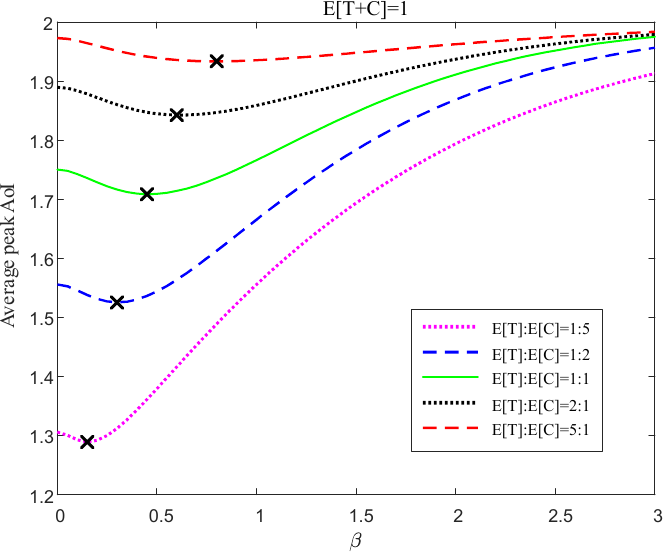}
  \caption{Average PAoI vs. $\beta$ under the transmission-aware threshold policy with exponential transmission and computation time for different $\mathbb{E}[T]:\mathbb{E}[C]$ in the service preemptive system.} \label{fig:exp17}
\end{figure}


\begin{figure}[!t]
  \centering
  \includegraphics[width=0.9\linewidth]{./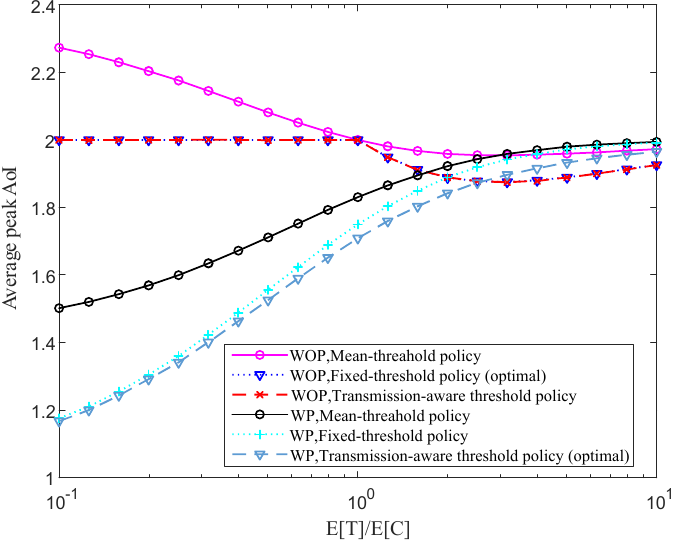}
  \caption{Average PAoI achieved by different policies  under exponential transmission and computation time with varying $\frac{\mathbb{E}[T]}{\mathbb{E}[C]}$ in two systems.}\label{fig:exp18}
\end{figure}

\subsection{Exponential Transmission Time and Computing Time}

In this example, we consider both transmission and computation times to be exponentially distributed with parameters $\lambda$ and $\mu$, respectively. The mean transmission time and computation time are given by $\mathbb{E}[T]=\frac{1}{\lambda}$ and $\mathbb{E}[C]=\frac{1}{\mu}$, respectively. 

In Fig.~\ref{fig:exp16}, we present the average PAoI under the fixed threshold policy for different  $\frac{\mathbb{E}[T]}{\mathbb{E}[C]}$ values. The optimal threshold $\theta^*$ that achieves the minimum PAoI is indicated by black cross signs. As discussed in Sec.\ref{subsec:specialcase}, when $ \frac{\mathbb{E}[T]}{\mathbb{E}[C]}>1$, we have $\lambda:\mu<1$, and then the Laplace transform $\mathcal{L}_{\mu}=\frac{\lambda}{\lambda+\mu}<\frac{1}{2}$. Consequently, the average PAoI performance of the optimal policy strictly decreases with increasing $\theta$, and the optimal threshold is $\theta^*=\infty$. Conversely, when $ \frac{\mathbb{E}[T]}{\mathbb{E}[C]}<1$, as shown in Fig.~\ref{fig:exp16}, the average PAoI is increasing and $\theta^*=0$. When $ \frac{\mathbb{E}[T]}{\mathbb{E}[C]}=1$, the average PAoI is constant for all threshold values. Theoretical and experimental results both indicate that the optimal threshold $\theta^*$ decreases as $ \frac{\mathbb{E}[T]}{\mathbb{E}[C]}$ increases.


Fig.~\ref{fig:exp17} illustrates the average PAoI achieved by the transmission-aware threshold policy in the preemptive system, where the optimal threshold $\beta^*$ is marked by black cross signs. As $\mathbb{E}[T]:\mathbb{E}[C]$ increases, both the optimal threshold $\beta^*$ and the minimum average PAoI increase. The pattern of the optimal threshold is completely opposite to that observed in the non-preemptive system. This indicates a fundamental difference between preemptive and non-preemptive systems. In the preemptive system, the impact of transmission delay on PAoI is greater than that of computation time. Even if the computation time is reduced, the PAoI performance of the system still decreases when the transmission time becomes longer. Moreover, new updates can only be generated after the old updates have been received by the server, resulting in a slower generation of new updates. While, in the non-preemptive system, as the transmission time increases and the computation time decreases, the source should decrease the threshold and submit new data more frequently.



In Fig.~\ref{fig:exp18}, we compare the performance of three policies: fixed threshold policy, transmission-aware threshold policy, and median threshold policy. The results reveal opposite trends in the two systems. In the preemptive system, the minimum PAoI strictly increases with an increase in $\frac{\mathbb{E}[T]}{\mathbb{E}[C]}$, whereas in the non-preemptive system, the minimum PAoI first decreases and then increases. These findings suggest that preemptive servers can better utilize dense data than non-preemptive systems. As the transmission time decreases, more frequent data submissions occur due to the decrease in the optimal threshold. Even if the computation time increases, the PAoI performance of preemptive systems still improves. In contrast, in a non-preemptive system, the minimum PAoI exhibits a U-shaped trend with respect to $\frac{\mathbb{E}[T]}{\mathbb{E}[C]}$. When the transmission time is too short, frequent data submissions increase the waiting time in the queue, resulting in a decrease in PAoI performance. Similarly, when the transmission time is too long, slow data submissions also lower PAoI performance. Only an appropriate ratio of transmission time to computation time can achieve the minimum PAoI of the system.

We also observed a counterintuitive phenomenon that the performance of non-preemptive systems can even surpass that of preemptive systems when $\frac{\mathbb{E}[T]}{\mathbb{E}[C]}$ is relatively large. In general, discarding old data to serve new data is beneficial when computation time follows an exponential distribution. However, experimental results show that retaining old data is a better choice for PAoI when transmission time is large and data submissions are infrequent.

\subsection{Pareto Transmission Time and Exponential Computing Time}

\begin{figure}[!t]
  \centering
  \includegraphics[width=0.9\linewidth]{./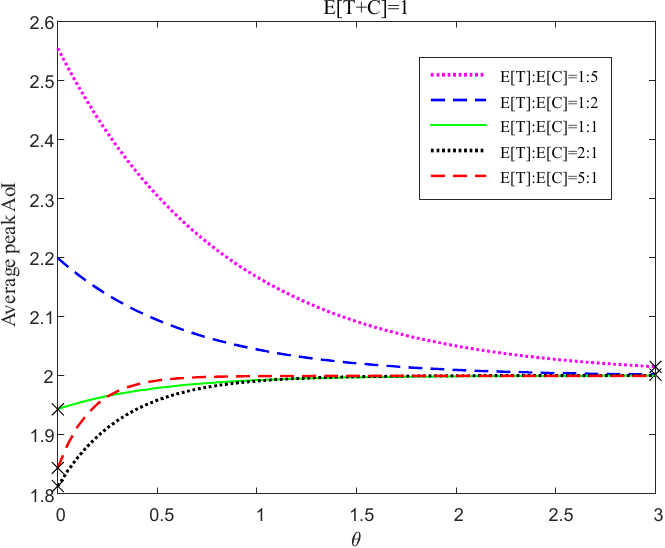}
  \caption{Average PAoI vs. $\theta$ under Pareto transmission and exponential computation time for different $E[T]:E[C]$ in the non-preemptive system.} \label{fig:exp19}
\end{figure}

\begin{figure}[!t]
  \centering
  \includegraphics[width=0.9\linewidth]{./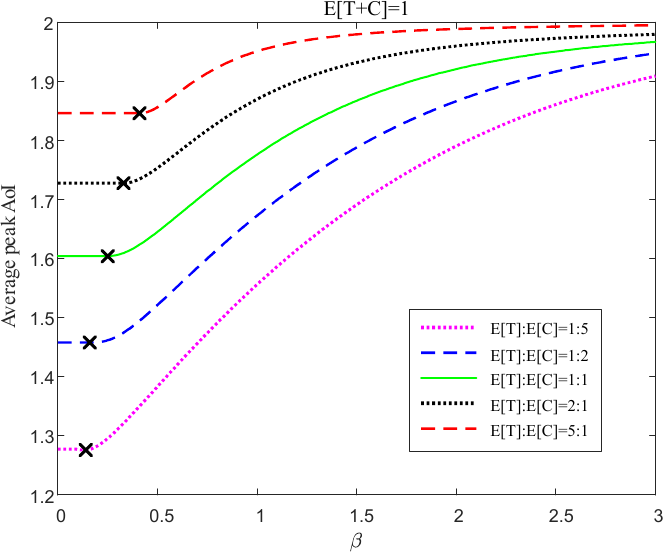}
  \caption{Average PAoI vs. $\beta$ under Pareto transmission and exponential computation time for different $E[T]:E[C]$ in the service preemptive system.} \label{fig:exp20}
\end{figure}

\begin{figure}[!t]
  \centering
  \includegraphics[width=0.9\linewidth]{./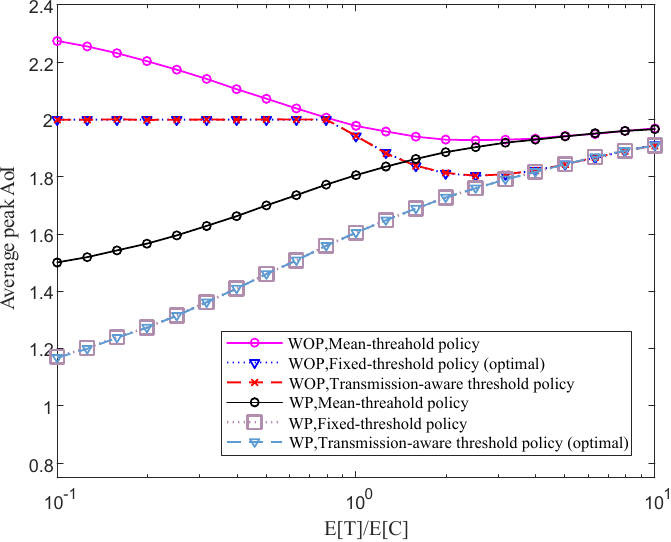}
  \caption{Average PAoI achieved by different policies  under Pareto transmission and exponential computation time with varying $\frac{\mathbb{E}[T]}{\mathbb{E}[C]}$ in two systems.} \label{fig:exp21}
\end{figure}

\begin{figure}[!t]
  \centering
  \includegraphics[width=0.9\linewidth]{./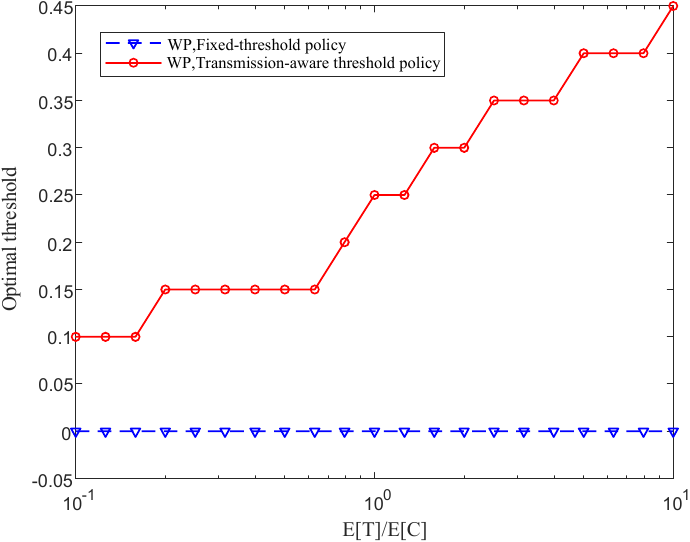}
  \caption{Optimal thresholds in different policies under Pareto transmission and exponential computation time with varying $\frac{\mathbb{E}[T]}{\mathbb{E}[C]}$ in preemptive system.} \label{fig:exp212}
\end{figure}

In this example, we assume that the transmission time follows a Pareto distribution, which is characterized by the parameters $(x_m,\alpha)$. The parameter $x_m$ represents the scale parameter, and $\alpha$ represents the shape parameter. The Pareto distribution is a heavy-tailed distribution. The smaller the $\alpha$, the heavier the tail. We choose this distribution based on the fact that the transmission time of updates is often related to the update length, which often follows a Pareto distribution. The probability density function of the Pareto distribution is given by $f(x)=\frac{\alpha x_m^{\alpha}}{x^{\alpha+1}},x>x_m$. We set $\alpha=2$ and vary $\mathbb{E}[C]$ by changing $x_m$, where $\mathbb{E}[C]$=$\frac{\alpha x_m}{\alpha-1}$. Fig.~\ref{fig:exp19} shows the average PAoI in the non-preemptive system under the optimal policy for different values of $\mathbb{E}[T]:\mathbb{E}[C]$. In this figure, the optimal threshold $\theta^*$ takes on the value of 0 when $\mathbb{E}[T]\le \mathbb{E}[C]$ and $\infty$ when $\mathbb{E}[T]>\mathbb{E}[C]$. 

Fig.~\ref{fig:exp20} displays the minimum average PAoI in the preemptive system. Due to the fact that the minimum transmission time is $x_m$, when the threshold $\beta$ is less than $x_m$, the transmission-aware policy reduces to a best-effort transmission policy, where the source sends a new update as soon as the old update arrives at the queue. Therefore, each curve in Fig.~\ref{fig:exp20} has a parallel segment on the far left. Once again, we observe that the optimal threshold $\beta^*$ and the minimum average PAoI both increase as $\frac{\mathbb{E}[T]}{\mathbb{E}[C]}$ increases. Moreover, we observe that $\beta^*$ is located at the far right end of the parallel segment, where the minimum average PAoI is close to that achieved by the best-effort policy. This finding is confirmed in Fig.~\ref{fig:exp21}.


Fig.~\ref{fig:exp21} shows the performance of the fixed threshold policy, the transmission-aware threshold policy, and the median threshold policy in two systems, considering a Pareto distribution transmission time. Similar to Fig.~\ref{fig:exp18}, the minimum PAoI follows a similar trend: it increases with $\frac{\mathbb{E}[T]}{\mathbb{E}[C]}$ in the preemptive system and initially decreases then increases in the non-preemptive system. However, different results from those in Fig.~\ref{fig:exp18} are observed in this case: 1) The minimum PAoI in the preemptive system is consistently lower than in the non-preemptive system; 2) In the preemptive system, the optimal policy and the optimal fixed threshold policy perform similarly for all $\frac{\mathbb{E}[T]}{\mathbb{E}[C]}$ values. The first result, combined with our observation in Fig.~\ref{fig:exp18}, suggests that the preemptive system are not always superior to non-preemptive systems in terms of PAoI. The second result confirms our observation in Fig.~\ref{fig:exp20}: the performance of the optimal policy is close to that of the best-effort policy. Fig.~\ref{fig:exp212} gives the optimal thresholds in fix threshold policy and transmission-aware threshold policy. We can see that the optimal fixed threshold policy has an optimal threshold $\theta^*$ of 0, at which point it also degenerates to the best-effort policy. This explains why it has similar performance to the optimal policy.

\subsection{Average AoI Performance}


In this example, we consider the AoI performance of fixed threshold and transmission-aware policies and compare them with the following two reference policies:
\begin{enumerate}
  \item Single-process policy: This policy treats transmission and computation as a single process, sending new data only after the old data has been delivered. The optimal policy under this setup is given in \cite{sun2017update}.
  \item Peak Age Threshold Policy With Postponed Plan (PAoI-TP) policy: This is a heuristic policy proposed in our previous work~\cite{zhu2022online} for the non-preemptive system. Specifically, the source estimates the peak age of new data in real-time and sends it only when the estimated PAoI exceeds a threshold and satisfies a Postponed Plan condition.
\end{enumerate}
Similarly, we consider both transmission and computation times to be exponentially distributed with parameters $\lambda$ and $\mu$, respectively. Fig.~\ref{fig:expaoi} illustrates the AoI performance of different policies in two systems. It is noteworthy that, due to the single-process considering transmission and computation as a unified process, each data is successfully delivered without entering the queue, resulting in identical performance in both systems. This explains why it corresponds to only one curve.
  
Observations reveal that in the non-preemptive system, the transmission-aware policy consistently outperforms the fixed threshold policy, contrary to experiments using PAoI as the metric. When $\mathbb{E}[T]/\mathbb{E}[C]<1$, the single-process policy outperforms the transmission-aware policy, and vice versa when $\mathbb{E}[T]/\mathbb{E}[C]>1$. This indicates that the advantages of joint transmission-computation optimization are more pronounced when the mean transmission time is greater than the mean computation time. Additionally, the AoI performance of the PAoI-TP policy always surpasses other policies. This suggests that while policies optimizing for PAoI as a goal may not simultaneously minimize AoI, we can directly use PAoI-designed policies to optimize AoI. On the other hand, in the preemptive system, the transmission-aware policy consistently outperforms both the fixed threshold policy and the single-process policy. This aligns with experiments using PAoI as the metric, highlighting the advantages of joint transmission-computation optimization in the preemptive system.

\begin{figure}[!t]
  \centering
  \includegraphics[width=0.9\linewidth]{./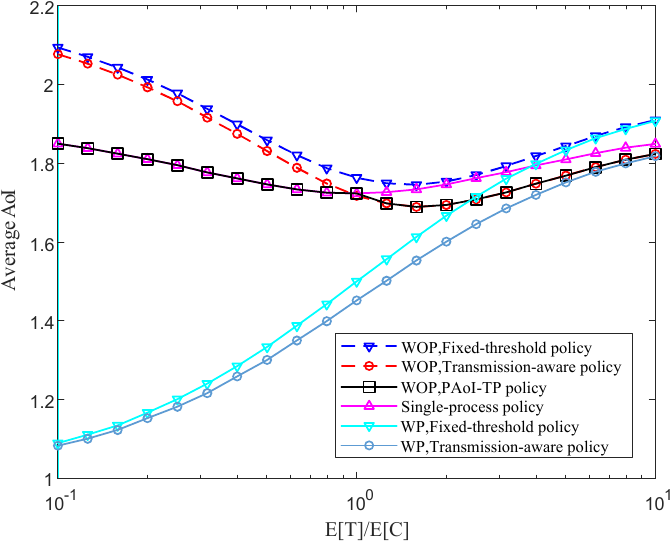}
  \caption{Average AoI achieved by different policies under exponential transmission and computation time with varying $\frac{\mathbb{E}[T]}{\mathbb{E}[C]}$ in two systems.} \label{fig:expaoi}
\end{figure}

\section{Conclusions}\label{sec:con}


In this paper, we address the problem of minimizing the PAoI in MEC systems, accounting for both transmission delays and computation times, in both preemptive and non-preemptive scenarios. We show that in the preemptive system, the fixed threshold policy achieves the minimum PAoI, while in the non-preemptive system, the transmission-aware threshold policy can achieve this minimum under the special condition of exponentially distributed computation time. The optimal threshold can be obtained either in closed-form or via a low-complexity searching algorithm, depending on the distributions of the transmission time and computation time. Our experimental results validate our theoretical analysis, and shows that the optimal threshold increases in preemptive systems but decreases in non-preemptive systems as the ratio of mean transmission time to mean computation time increases. Our analytical findings can offer valuable insights for designing scheduling algorithms in MEC systems. Future work can focus on optimizing AoI, multi-source systems, and multi-user MEC systems.

\appendix

\subsection{Proof of Lemma \ref{lem:maxZ}}\label{proof:lem:maxZ}
We prove this lemma by contradiction. Suppose that under the optimal policy $\pi\in \Pi$, packet $k+1$ is submitted $G_{k}$ seconds after  packet $k$ is delivered, which means $S_{k+1}=D_{k}+G_{k}$. using $X_k=S_k-S_{k-1}$, we can obtain the following expression:
\begin{align*}
  P_{k}&=X_k+T_k+W_k+C_k\\
  &=S_k-S_{k-1}+T_k+W_k+C_k.
\end{align*}
Now we construct a new policy $\pi'$ that  satisfies $S^{'}_j=S_j$ for all $j<k$ and $S^{'}_j=S_j-G_{k}$ for all $j \ge k$. In this policy, $G^{'}_k=0$. The new policy has no impact on packets before packet $k$, thus $P^{'}_j=P_j$ for all $j<k$. The new policy $\pi'$ submits packet $k$ and all subsequent packets in advance. Since packet $k$ is still submitted when the queue is empty and the edge server is idle in policy $\pi'$, the waiting time of these packets remains unchanged. Therefore, we have $W^{'}_j=W_j$ for all $j\ge k$. Then we have $P^{'}_k=P_k-G_k$ and $P^{'}_j=P_j$ for $j> k$. 

According to the definition of the average PAoI in \eqref{def:PAoI}, policy $\pi'$ achieves a lower average PAoI than policy $\pi$. This contradicts the assumption that $\pi$ is optimal. Therefore, the optimal policy must satisfy $S_k\le D_{k-1}$ for all $k=1,2,\cdots$ in the optimal policy.


\subsection{Proof of Theorem \ref{The:WOP_RT}}\label{proof:The:WOP_RT}

We prove this theorem by showing that any continuous working policy can be outperformed by a randomized threshold policy. Our approach is inspired by \cite{arafa2020agemin}, but we make adjustments and adaptations to their method to account for the different optimization objectives and models we consider.

Under a continuous working policy, the action of packet $k$ $Z_k=T_k+W_k+\min\{\Xi_k,C_k\}$, where $\Xi_k$ is a random value determined by all history information $\{S_0,T_0,C_0,W_0,\cdots,S_{k-1},T_{k-1},C_{k-1},W_{k-1}\}$ before packet $k$, denoted by $\mathcal{H}_k$, packet $k$'s transmission time duration $T_k$ and packet $k$'s waiting time $W_k$. Specifically, the source first observes $\mathcal{H}_k$, $T_k$ and $W_k$, and then chooses $\Xi_k$ by a conditional probability $p(\gamma,\eta , \pmb{h})\triangleq Pr(\Xi_k|T_k=\gamma,W_k=\eta,\mathcal{H}_k=\pmb{h})$. By the definition of $W_k$ we have
\begin{align}
  W_k=&\max\{0,D_{k-1}-S_{k}-T_k\}\\
  =&\max\{0,C_{k-1}-\Xi_{k-1}-T_k\}.\label{equ:Wk}
\end{align}
Since $T_k$'s and $C_k$'s are independent of the scheduling policy, a policy can be equivalently written as $\pi =\{\Xi_k,i\ge 1\}$. The $k+1$-th peak AoI is
\begin{align}
  P_{k+1}&=X_{k+1}+T_{k+1}+W_{k+1}+C_{k+1}\\
  &=Z_k+T_{k+1}+W_{k+1}+C_{k+1}\\
  &=T_k+W_k+\min\{\Xi_k,C_k\}+T_{k+1}+W_{k+1}+C_{k+1}.\label{equ:Pk1}
\end{align}
To facilitate the analysis, we define a special value for packet $k$ as
\begin{align}
  R_k\triangleq\min\{\Xi_k,C_k\}+2W_{k+1}.\label{def:Rk}
\end{align}
Based on \eqref{equ:Wk}, $T_{k+1}$ also affects $R_k$. For packet $k$ and a fixed history $\mathcal{H}_k$, we group all the status updating sample paths that have the same $T_k,W_{k}$ and perform a statistical averaging over all of them to get the following average $R_k$:
\begin{equation}
  \widehat{R}_{k}(\gamma,\eta,\pmb{h}) \triangleq \mathbb{E}\left[R_k|T_k=\gamma,W_{k}=\eta,\mathcal{H}_k=\pmb{h}\right],\label{def:wRk}
\end{equation}
where the expectation is taken for $\Xi_{k},C_k$ and $T_{k+1}$.

Then, we derive a lower bound of the average PAoI as shown in \eqref{equ:a}-\eqref{equ:g}. There, \eqref{equ:a} follows by \eqref{def:PAoI}; \eqref{equ:b} follows by \eqref{equ:Pk1}; \eqref{equ:c} follows by combining the random variables with the same subscript together and the fact that $T_k$'s and $C_k$'s are i.i.d.; \eqref{equ:d} follows that: by \eqref{equ:Wk} we have $T_k+W_k\le C_k$, and then we have 
\begin{align*}
  &\lim_{N \to \infty}\frac{\mathbb{E}\left[T_0+W_0-T_N-W_N\right]}{N}\\
  \le& \lim_{N \to \infty}\frac{\mathbb{E}\left[T_0+W_0\right]}{N}
  \le \lim_{N \to \infty}\frac{\mathbb{E}\left[C_0\right]}{N}=0,
\end{align*}
and
\begin{align*}
  &\lim_{N \to \infty}\frac{\mathbb{E}\left[T_0+W_0-T_N-W_N\right]}{N}\\
  \ge& -\lim_{N \to \infty}\frac{\mathbb{E}\left[T_N+W_N\right]}{N}
  \ge -\lim_{N \to \infty}\frac{\mathbb{E}\left[C_N\right]}{N}=0,
\end{align*}
; \eqref{equ:e} follows by \eqref{def:Rk}; $R^*(\mathcal{H}_k)$ in \eqref{equ:f} denotes the minimum value of $\widehat{R}_{k}(\gamma,\eta,h)$ over all possible $T_k$ and $W_{k}$; $R_{\min}$ in \eqref{equ:g} denotes the minimum value of $R^*(\mathcal{H}_k)$ over all packets and their corresponding histories, i.e, the minimum over all $k$ and $\mathcal{H}_k$.
\newline Note that in the continuous working policy achieving $R_{\min}$, $\Xi_k$ is determined by a certain distribution $p(\gamma,\eta,\pmb{h})= Pr(\Xi_k|T_k=\gamma,W_k=\eta,\mathcal{H}_k=\pmb{h})$, for a fixed condition, i.e., fixed values of $T_k$, $W_k$, and $\mathcal{H}_k$. Now observe that $T_i$'s are i.i.d.. Therefore, if we directly apply the distribution that achieves $R_{\min}$ over all packets without considering the history information, all inequations in \eqref{equ:a}-\eqref{equ:g} become equations. The new policy achieves the lower bound and is a fixed threshold policy. This completes the proof.

\begin{figure*}[!t]
  \normalsize
  \begin{align}
    &\lim_{N \to \infty}\frac{1}{N}\mathbb{E}_{\pi}\left[\sum_{k=0}^{N-1}P_{k+1}\right]=\lim_{N \to \infty}\frac{\sum_{k=0}^{N-1}\mathbb{E}\left[P_{k+1}\right]}{N}\label{equ:a}\\
    =&\lim_{N \to \infty}\frac{\sum_{k=0}^{N-1}\mathbb{E}\left[T_k+W_k+\min\{\Xi_k,C_k\}+T_{k+1}+W_{k+1}+C_{k+1}\right]}{N}\label{equ:b}\\
    =&\lim_{N \to \infty}\frac{\sum_{k=0}^{N-1}\mathbb{E}\left[\min\{\Xi_k,C_k\}+2W_{k+1}\right]}{N}+2\mathbb{E}\left[T\right]+\mathbb{E}\left[C\right]+\lim_{N \to \infty}\frac{\mathbb{E}\left[T_0+W_0-T_N-W_N\right]}{N}\label{equ:c}\\
    =&\lim_{N \to \infty}\frac{\sum_{k=0}^{N-1}\mathbb{E}\left[\min\{\Xi_k,C_k\}+2W_{k+1}\right]}{N}+2\mathbb{E}\left[T\right]+\mathbb{E}\left[C\right]\label{equ:d}\\
    =&\lim_{N \to \infty}\frac{\sum_{k=0}^{N-1}\mathbb{E}_{\mathcal{H}_k}\left[\mathbb{E}_{T_k}\left[\mathbb{E}_{W_{k}}\left[\widehat{R}_k(T_k,W_k,\mathcal{H}_k)|W_k\right]|T_k\right]|\mathcal{H}_k\right]}{N}+2\mathbb{E}\left[T\right]+\mathbb{E}\left[C\right]\label{equ:e}\\
    \ge &\lim_{N \to \infty}\frac{\sum_{k=0}^{N-1}\mathbb{E}_{\mathcal{H}_k}\left[R^*(\mathcal{H}_k)|\mathcal{H}_k\right]}{N}+2\mathbb{E}\left[T\right]+\mathbb{E}\left[C\right]\label{equ:f}\\
    \ge &\lim_{N \to \infty}\frac{\sum_{k=0}^{N-1}R_{\min}}{N}+2\mathbb{E}\left[T\right]+\mathbb{E}\left[C\right]\\
    \ge &R_{\min}+2\mathbb{E}\left[T\right]+\mathbb{E}\left[C\right]\label{equ:g}
  \end{align}  
  \hrulefill
  \vspace*{2pt}
\end{figure*}

\subsection{Proof of Theorem \ref{The:WOP_FT}}\label{proof:The:WOP_FT}
For any randomized threshold policy, the average of PAoI as shown in \eqref{def:prob2} is
\begin{align*}
  \mathbb{E}[P]=&2\mathbb{E}\left[T\right]+\mathbb{E}\left[C\right]\\
  &+\int_{0}^{\infty} f_{\Theta}(\theta) \bigg( \theta \int_{\theta}^{\infty}f_C(x)\,dx+ \int_{0}^{\theta}xf_C(x)  \,dx  \\
  & +2\int_{0}^{\infty}f_T(x)\int_{x+\theta}^{\infty} f_C(y)(y-\theta-x) \,dy   \,dx \bigg) \,d\theta\\
  \ge &2\mathbb{E}\left[T\right]+\mathbb{E}\left[C\right]\\
  &+\mathop{\min}_{\theta}\bigg( \theta \int_{\theta}^{\infty}f_C(x)\,dx+ \int_{0}^{\theta}xf_C(x)  \,dx  \\
  & +2\int_{0}^{\infty}f_T(x)\int_{x+\theta}^{\infty} f_C(y)(y-\theta-x) \,dy   \,dx\bigg)
\end{align*}
Suppose $\theta^*$ achieves the minimum of the above inequality. Then, construct a new policy $\pi'$ satisfying $Pr(\Theta'=\theta^*)=1$. Note that policy $\pi^*$ is a fixed threshold policy defined in Sec.~\ref{subdec:FT}. Therefore, any randomized threshold policy is outperformed by a fixed threshold policy. This completes the proof.

\subsection{Proof of Lemma \ref{Lem:expPz}}\label{proof:Lem:expPz}
Based on \eqref{equ:EC} and \eqref{equ:EW} we have 
\begin{align}
  P''(\theta)=2\int_{0}^{\infty}f_C(x+\theta)\lambda e^{-\lambda x}\,dx-f_C(\theta)\label{proof_Pz}
\end{align}
For any $\theta$ that satisfies \eqref{equ_fi}, by submitting $f_T(x)=\lambda e^{-\lambda x}, x > 0$ into \eqref{equ_fi}, we have
\begin{align*}
  F_C(\theta)+1&=2\int_{0}^{\infty}f_T(x)F_C(x+\theta)\,dx\\
  &=2\int_{\theta}^{\infty}f_C(y)\int_{0}^{y-\theta} f_T(x) \,dx\,dy\\
  &=2\int_{\theta}^{\infty}f_C(y)(1-e^{-\lambda(y-\theta)})\,dy\\
  &=2-2\int_{\theta}^{\infty}f_C(y)e^{-\lambda(y-\theta)}\,dy\\
  &=2-2\int_{0}^{\infty}f_C(x+\theta)e^{-\lambda x}\,dx
\end{align*}
Finally, we can obtain $P''(\theta)=\lambda(1-F_C(\theta))-f_C(\theta)$ by submitting the above last equation into \eqref{proof_Pz}.

\subsection{Proof of Proposition~\ref{Lem:pz}}\label{proof:lempz}
We prove the proposition by contradiction. In the first case, if $P''_z(\theta)>0$ for all $\theta_1<\theta<\theta_2$, we suppose that $\theta_a,\theta_b$ ($\theta_1<\theta_a<\theta_b<\theta_2$) satisfy \eqref{equ_fi}, and there is no other $\theta \in (\theta_a,\theta_b)$ that satisfies \eqref{equ_fi}. Since $P'(\theta)$ is continuous, there must exist $\theta_c\in (\theta_a,\theta_b)$ such that $P'(\theta_c)=0$. Based on our assumption, we have $P''(\theta_c)<0$. Then, we have $P''_z(\theta_c)=P''(\theta_c)<0$, which contradicts $P''_z(\theta)>0$ for all $\theta_1<\theta<\theta_2$. Thus, if $P''_z(\theta)>0$ for all $\theta_1<\theta<\theta_2$, there is at most one $\theta\in (\theta_1,\theta_2)$ that satisfies \eqref{equ_fi}. If $\theta\in (\theta_1,\theta_2)$ satisfies \eqref{equ_fi}, we have $P''(\theta)=P''_z(\theta)>0$, which means $\theta$ also satisfies \eqref{equ_fi2}. We complete the proof of the first case.

In the second case, if $P''_z(\theta)>0$ for all $\theta_1<\theta<\theta_2$, we suppose that $\theta_d$ ($\theta_1<\theta_d<\theta_2$) satisfies \eqref{equ_fi} and \eqref{equ_fi2}. Then, we can obtain $P''_z(\theta_d)=P''(\theta_d)>0$, which also contradicts $P''_z(\theta)>0$ for all $\theta_1<\theta<\theta_2$.  Therefore, if $P''_z  (\theta)<0$ for all $\theta_1<\theta<\theta_2$, there is no $\theta\in [\theta_1,\theta_2]$ that satisfies \eqref{equ_fi}. We complete the proof of the second case.

\subsection{Proof of Theorem \ref{The:WP_SD}}\label{proof:The:WP_SD}
We solve this proof in two steps: First, we use similar method in Proof of Theorem~\ref{The:WOP_RT} to prove that the optimal policy is stationary. Then, we turn Problem~\eqref{def:prob} under stationary policies to an infinite-horizon MDP in one \textit{epoch} and prove the optimal policy is deterministic.

\textbf{Step 1:} Similar with Proof of Theorem~\ref{The:WOP_RT}, we show any continuous working policy is outperformed by a stationary policy. Under a continuous working policy, $Z_i=T_i+\min\{\Xi_i,C_i\}$, where $\Xi_i$ is a random value determined by all history information $\{S_0,T_0,C_0,\cdots,S_{i-1},T_{i-1},C_{i-1}\}$ before packet $i$, denoted by $\mathcal{H}_i$ and packet $i$'s transmission time duration $T_i$. Specifically, the source first observes $\mathcal{H}_i$ and $T_i$, and then chooses $\Xi_i$ by a conditional probability $p(\gamma, \pmb{h})\triangleq Pr(\Xi_i|T_i=\gamma,\mathcal{H}_k=\pmb{h})$. We use $\Omega_i$ to denote the event packet $i$ is delivered successfully, which is given by $\Omega_i: C_i\le \Xi_i+T_{i+1}$. Define
\begin{equation}
  R_i(\Xi_i)=Z_i+(T_i+C_i)\mathbf{1}_{\Omega_i}.\label{def:Ri}
\end{equation}
Note that $\Omega_i$ is determined by $C_i,\Xi_i$ and $T_{i+1}$. For a fixed history $\mathcal{H}_i$, we group all the status updating sample paths that have the same $T_i$ and perform a statistical averaging over all of them to get the following average $R_i(\Xi_i)$ for packet $i$:
\begin{equation}
  \widehat{R}_i(\gamma,\mathcal{H}_i) \triangleq \mathbb{E}\left[R_i(\Xi_i)|T_i=\gamma,\mathcal{H}_i\right],\label{def:hatR}
\end{equation}
where the expectation is taken for $\Xi_i,C_i$ and $T_{i+1}$. Similarly, define the average $\mathbf{1}_{\Omega_i}$ as
\begin{equation}
  \widehat{x}_i(\gamma,\mathcal{H}_i) \triangleq \mathbb{E}\left[\mathbf{1}_{\Omega_i}|T_i=\gamma,\mathcal{H}_i\right].\label{def:hatx}
\end{equation}
Without loss of generality, suppose packet $N$ is the $K$-th successfully delivered packet. Then we have 
\begin{align}
  &\lim_{K \to \infty}\frac{1}{K}\mathbb{E}_{\pi}\left[\sum_{k=1}^{K}P_{k}\right]\label{equ:3a}\\
  =&\lim_{N \to \infty}\frac{\mathbb{E}_{\pi}\left[\sum_{k=1}^{N}\sum_{j=i_{k-1}}^{i_k-1}(Z_j+(T_{j+1}+C_{j+1})\mathbf{1}_{\Omega_{j+1}})\right]}{\mathbb{E}_{\pi}\left[\sum_{i=1}^{N}\mathbf{1}_{\Omega_{i}}\right]}\label{equ:3b}\\
  =&\lim_{N \to \infty}\frac{\mathbb{E}_{\pi}\left[\sum_{i=0}^{N-1}Z_{i}+\sum_{i=1}^{N}\left((T_{i}+C_{i})\mathbf{1}_{\Omega_{i}}\right)\right]}{\mathbb{E}_{\pi}\left[\sum_{i=1}^{N}\mathbf{1}_{\Omega_{i}}\right]}\label{equ:3c}\\
  =&\lim_{N \to \infty}\frac{\mathbb{E}_{\pi}\left[\sum_{i=1}^{N}\left(Z_{i}+(T_{i}+C_{i})\mathbf{1}_{\Omega_{i}}\right)\right]}{\mathbb{E}_{\pi}\left[\sum_{i=1}^{N}\mathbf{1}_{\Omega_{i}}\right]}.\label{equ:3d}
\end{align}
In the above equations, \eqref{equ:3b} follows by \eqref{equ:Pk_}; \eqref{equ:3c} follows by equivalent transformations; and \eqref{equ:3d} follows that, $\lim_{N \to \infty}\frac{Z_N-Z_0}{\mathbb{E}_{\pi}\left[\sum_{i=1}^{N}\mathbf{1}_{\Omega_{i}}\right]}=0$. Then, we derive a lower bound of the average PAoI achieving by a continuous working policy as follows.
\begin{align}
  &\lim_{N \to \infty}\frac{\mathbb{E}_{\pi}\left[\sum_{i=1}^{N}\left(Z_{i}+(T_{i}+C_{i})\mathbf{1}_{\Omega_{i}}\right)\right]}{\mathbb{E}_{\pi}\left[\sum_{i=1}^{N}\mathbf{1}_{\Omega_{i}}\right]}\label{equ:2a}\\
  =&\frac{\sum_{i=1}^{\infty}\mathbb{E}\left[R_i(\Xi_i)\right]}{\sum_{i=1}^{\infty}\mathbb{E}\left[\mathbf{1}_{\Omega_i}\right]}\label{equ:2b}\\
  =&\frac{\sum_{i=1}^{\infty}\mathbb{E}_{\mathcal{H}_i}\left[\mathbb{E}_{T_i}\left[\widehat{R}_i(T_i,\mathcal{H}_i)|T_i\right]|\mathcal{H}_i\right]}{\sum_{i=1}^{\infty}\mathbb{E}\left[\mathbf{1}_{\Omega_i}\right]}\label{equ:2c}\\
  =&\frac{\sum_{i=1}^{\infty}\mathbb{E}_{\mathcal{H}_i}\left[\mathbb{E}_{T_i}\left[\widehat{x}_i(T_i,\mathcal{H}_i)|T_i=\gamma\right]\frac{\mathbb{E}_{T_i}\left[\widehat{R}_i(T_i,\mathcal{H}_i)|T_i\right]}{\mathbb{E}_{T_i}\left[\widehat{x}_i(T_i,\mathcal{H}_i)|T_i\right]}|\mathcal{H}_i\right]}{\sum_{i=1}^{\infty}\mathbb{E}\left[\mathbf{1}_{\Omega_i}\right]}\label{equ:2d}\\
  \ge &\frac{\sum_{i=1}^{\infty}\mathbb{E}_{\mathcal{H}_i}\left[\mathbb{E}_{T_i}\left[\widehat{x}_i(T_i,\mathcal{H}_i)|T_i\right]R^*(\mathcal{H}_i)|\mathcal{H}_i\right]}{\sum_{i=1}^{\infty}\mathbb{E}\left[\mathbf{1}_{\Omega_i}\right]}\label{equ:2e}\\
  \ge &\frac{\sum_{i=1}^{\infty}\mathbb{E}_{\mathcal{H}_i}\left[\mathbb{E}_{T_i}\left[\widehat{x}_i(T_i,\mathcal{H}_i)|T_i\right]|\mathcal{H}_i\right]R_{\min}}{\sum_{i=1}^{\infty}\mathbb{E}\left[\mathbf{1}_{\Omega_i}\right]}\label{equ:2f}\\
  =&R_{\min}.
\end{align}
In the above equations and inequalities, \eqref{equ:2b} follows by the monotone convergence theorem; \eqref{equ:2c} follows by \eqref{def:hatR}; $R^*(\mathcal{H}_i)$ in \eqref{equ:2d} denotes the minimum value of $\frac{\mathbb{E}_{T_i}\left[\widehat{R}_i(T_i,\mathcal{H}_i)|T_i\right]}{\mathbb{E}_{T_i}\left[\widehat{x}_i(T_i,\mathcal{H}_i)|T_i\right]}$; $R_{\min}$ in \eqref{equ:2f} denotes the minimum value of $R^*(\mathcal{H}_i)$ over all packets and their corresponding histories, i.e., the minimum over all $i$ and $\mathcal{H}_i$; and \eqref{equ:2f} follows by the relationships between $\widehat{x} _i$ and $x_i$ which are the same as those between $\widehat{R} _i$ and $R_i$  \eqref{def:hatx}. These equations give a lower bound for the average PAoI under the continuous working policy.
\newline Note that in the continuous working policy achieving $R_{\min}$, $\Xi_i$ is determined by a conditional distribution $Pr(\Xi_i|T_i)=p(\gamma,\pmb{h})=Pr(\Xi_i|T_i,\mathcal{H}_k=\pmb{h})$, which is only depends on $T_i$. Now observe that $T_i$'s are i.i.d.. Therefore, if we directly apply the distribution that achieves $R_{\min}$ over all packets without considering the history information, all inequations in \eqref{equ:2a}-\eqref{equ:2f} become equations. The new policy achieves the lower bound and is a stationary policy.

\textbf{Step 2:} We use the term \textit{epoch} to denote the time between two consecutive successfully updates. For example, the $k$-th epoch starts at time $D_{i_{k-1}}$ and ends at time $D_{i_{k}}$. Each epoch has a corresponding PAoI value. Since the optimal policy is stationary and $T_k$'s and $C_k$'s are i.i.d., PAoI values from different epochs are independent. Therefore, the minimum average PAoI achieved in one epoch is also the optimal result for \eqref{def:prob} in the case $\omega\equiv \text{wp}$. We redefine some notations to fit in the epoch definition, let (1) $T_r^e$ denote transmission time of the $r$-th updates in one epoch; (2) $C^e_r$ denote computation time of the $r$-th updates; (3) $Z^e_r=T^e_r+\min\{\Xi^e_r,C^e_r\}$ denote the time between the transmission instance of the $r$-th and $r+1$-th updates; and (4) $\Omega_r^e:C_r^e\le \Xi_r^e+T^e_{r+1}$.

Suppose there are total $R$ updates in one epoch, the PAoI value is equivalent to $P^e=\sum_{r=0}^{R-1}Z^e_r+T^e_R+C_R^e$. We consider the following problem:
\begin{align}
  \min_{\Xi^e_0,\cdots}&\mathbb{E}\left[\sum_{r=0}^{R-1}Z^e_r+T^e_R+C^e_R\right]\label{def:prob_proof}.
\end{align}
Let $\mathbf{I}_{r}^R$ denote event $R\ge r$, which is given by
\begin{align*}
  \mathbf{I}_{r}^R=\prod_{i=1}^{r-1}\mathbf{1}_{\overline{\Omega^e_i}}.
\end{align*}
Note that we consider the $0$-th update, the start of one epoch, is a successfully delivered packet, thus, at least one packet in each epoch. Using this, the average PAoI in one epoch is equivalent to
\begin{align*}
  &\mathbb{E}\left[\sum_{r=1}^{R}Z^e_{r-1}+T^e_R+C^e_R\right]=\mathbb{E}\left[\sum_{r=1}^{\infty}Z^e_{r-1}\mathbf{I}_{r}^{R}+T^e_{R}+C^e_R\right]\\
  =&\mathbb{E}\left[\sum_{r=1}^{\infty}\prod_{i=1}^{r-1}\mathbf{1}_{\overline{\Omega^e_i}}(Z^e_{r-1}+(T^e_r+C^e_r)\mathbf{1}_{\Omega^e_r})\right]
\end{align*}
To solve Problem~\eqref{def:prob_proof}, we formulate an infinite-horizon continuous-state MDP problem with the following elements:
\begin{itemize}
  \item State $\mathcal{S}_r$: the transmission time of the $(r-1)$-th update packet $T^e_{r-1}$.
  \item Action $\mathcal{A}_r$: the parameter $\Xi^e_{r-1}$ determined according to a conditional probability.
  \item Cost: 
  \begin{align*}
    c(\mathcal{S}_r,\mathcal{A}_r)&=\mathbb{E}\left[\prod _{i=1}^{r-1}\mathbf{1}_{\overline{\Omega^e_i}}(Z^e_{r-1}+(T^e_r+C^e_r)\mathbf{1}_{\Omega^e_r})\right]\\
    &=P(\bigcap  _{i=1}^{r-1}\overline{\Omega^e_i})\mathbb{E}\left[Z^e_{r-1}+(T^e_r+C^e_r)\mathbf{1}_{\Omega^e_r}\right].
  \end{align*}
  \item State transition probabilities:
  \begin{align*}
    P(T^e_{r+1}=y|T^e_{r}=x)=P(T^e_{r+1}=y).
  \end{align*}
\end{itemize}
The cost-to-go function is defined as
\begin{align*}
  J_{k}(x)=\min_{\mathcal{A}_k,\cdots}\mathbb{E}\left[\sum_{r=k}^{\infty}c(\mathcal{S}_r,\mathcal{A}_r)|\mathcal{S}_k=x\right]
\end{align*}
Define the cost-to-go function under an arbitrary stationary policy $\pi$ as
\begin{align*}
  J_{\pi,k}(x)=\mathbb{E}_{\pi}\left[\sum_{r=k}^{\infty}c(\mathcal{S}_r,\mathcal{A}_r)|\mathcal{S}_k=x\right]
\end{align*}
Since $\mathbb{E}\left[Z^e_{r-1}+(T^e_r+C^e_r)\mathbf{1}_{\Omega^e_r}\right]\le 2\mathbb{E}\left[T^e_r+C^e_r\right]<\infty$ and $P(\bigcap _{i=1}^{r-1}\overline{\Omega^e_i})\rightarrow 0$ as $r\rightarrow \infty$, for a sufficiently large $M$, we have $J_{\pi,M+1}\thickapprox 0$. Based on the backward recursion of the stochastic Bellman's dynamic programming, $J_{\pi,k}(x)$ can be expressed recursively as
\begin{align*}
  J_{\pi,k}(x)=\mathbb{E}_{\pi}\left[c(x,\mathcal{A}_k)+\int_{0}^{\infty}J_{k+1}(y)f_{T}(y)  \,dy \right],
\end{align*}
Denote $\pi^*$ as the optimal stationary policy. Note that $J_{\pi^*,k}(x)$ is the optimal cost-to-go function $J_k$. Clearly, $J_{\pi,M+1}(x)=J_{\pi^*,M+1}(x)\approx 0$. For $k\le M$, we have
\begin{align*}
  J_{\pi,k}(x)=&\int_{x}^{\infty} \left[c(x,u)+\int_{0}^{\infty}J_{\pi,k+1}(y)f_{T}(y)\,dy\right]f_{\mathcal{A}_k}(u)\,du\\
  \ge&\int_{x}^{\infty} \left[c(x,u)+\int_{0}^{\infty}J_{\pi^*,k+1}(y)f_{T}(y)\,dy\right]f_{\mathcal{A}_k}(u)\,du\\
  \ge&\min_{\Xi^e_{k-1}}\left[c(x,\Xi^e_{k-1})+\int_{0}^{\infty}J_{\pi^*,k+1}(y)f_{T}(y)\,dy\right]\\
  =&J_{\pi^*,k}
\end{align*}
The last inequality follows since a convex combination is always larger than the minimum. By choosing a policy $\hat{\pi}$ with $P(\Xi^e_{k-1}=u|T^e_{k-1}=x)=1$ if
\begin{align*}
  u=\min_{\Xi_{k-1}^e}\left[c(x,\Xi^e_{k-1})+\int_{0}^{\infty}J_{\pi^*,k+1}(y)f_{T}(y)\,dy\right],
\end{align*}
all the inequalities above become equalities. Note that in $\hat{\pi}$, $\Xi^e_r$ is a deterministic function about $T^e_{r}$, which mean that $\hat{\pi}$ is a stationary deterministic policy and achieves the optimal solution for Problem~\eqref{def:prob_proof}. This completes the proof.

\subsection{Proof of Lemma~\ref{Lem:pc}}\label{proof:Lem:pc}
Let $c_1>0$, and let the solution of Problem~\eqref{def:probX_const_lag} be given by $\theta^{(1)}$ for $c=c_1$. Now for some $c_2>c_1$, we have
  \begin{align*}
    p(c_1)&=\min_{\theta} \mathbb{E}\left[T+\min\{\theta,C\}+(T+C)\mathbf{1}_{\Omega}\right]-c_1 Pr(\Omega)\\
    &\ge \min_{\theta} \mathbb{E}\left[T+\min\{\theta,C\}+(T+C)\mathbf{1}_{\Omega}\right]-c_2 Pr(\Omega)\\
    &\ge p(c_2),
  \end{align*}
  where the last inequality follows since $\theta^{(1)}$ is also feasible in Problem~\eqref{def:probX_const_lag} for $c=c_2$.

  Next, note that both Problem~\eqref{def:probX_const_lag} and Eq.~\ref{def:probX_const} have the same feasible set. In addition, if $p(c)=0$, then the objective function of \eqref{def:probX_const} satisfies
  \begin{align*}
    \frac{\mathbb{E}\left[T+\min\{\theta,C\}+(T+C)\mathbf{1}_{\Omega}\right]}{Pr(\Omega)}=c.
  \end{align*}
  Hence, the objective function of \eqref{def:probX_const} is minimized by finding the minimum $c\ge 0$ such that $p(c)=0$. Finally, by the first part of lemma, there can only be one such $c$, which we denote $c^*$.

\subsection{Proof of Theorem~\ref{The:optiamlC}}\label{proof:optimalC}
Similar with Sec.~\ref{subsec:fixed}, in order to solve the Problem~\eqref{def:probX}, we consider the following functional parameterized problem:
\begin{align}
  f(c)\triangleq \min_{g:\theta=g(T)} &\mathbb{E}\left[Z+(T+C)\mathbf{1}_{\Omega}\right]-cPr(\Omega)\label{def:probX_const_lag_p}\\
   s.t. \quad &g(T)\ge 0
\end{align}
There exist a $c^*$ such that $f(c^*)=0$ and the optimal policy for Problem~\eqref{def:probX} is the solution for this parameterized problem with $c=c^*$. 

Then, we use the Lagrangian duality approach to solve the above parameterized problem. Since the computation time is exponentially distributed, we have $f_C(x)=\mu e^{-\mu x}$. By submitting $f_C(x)$ into \eqref{equ:EC}\eqref{equ:PO} and \eqref{equ:ETC}, we have 
\begin{align*}
  \mathbb{E}\left[\min\{\theta,C\}\right]&=\frac{1}{\mu}-\frac{1}{\mu}\int_{0}^{\infty}f_T(x)e^{-\mu g(x)}\,dx\\
  Pr(\Omega)&=1-\mathcal{L}_{\mu}\int_{0}^{\infty}f_T(x)e^{-\mu g(x)}\,dx\\
  \mathbb{E}\left[T \mathbf{1}_{\Omega}\right]&=\mathbb{E}[T]-\mathcal{L}_{\mu}\int_{0}^{\infty}f_T(x)xe^{-\mu g(x)}\,dx\\
  \mathbb{E}\left[C \mathbf{1}_{\Omega}\right]&=\frac{1}{\mu}-\frac{1}{\mu}\int_{0}^{\infty}f_T(x)e^{-\mu g(x)}\,dx\\
  &\quad -\mathcal{L}_{\mu}\int_{0}^{\infty}f_T(x)g(x)e^{-\mu g(x)}\,dx\\
  &\quad -\mathcal{M}_{\mu}\int_{0}^{\infty}f_T(x)e^{-\mu g(x)}\,dx
\end{align*}
where $\mathcal{L}_{\mu}=\int_{0}^{\infty}f_T(x)e^{-\mu x}\,dx$ is the Laplace transform of the transmission time distribution and $\mathcal{M}_{\mu}=\int_{0}^{\infty}xf_T(x)e^{-\mu x}\,dx$ is the Laplace transform of $xf_T(x)$. The Lagrangian is
\begin{align*}
  &L=\frac{2}{\mu}+2\mathbb{E}[T]-c-\int_{0}^{\infty}u(x)g(x)\,dx \\
  &+\int_{0}^{\infty}f_T(x)e^{-\mu g(x)}(c\mathcal{L}_{\mu}-\mathcal{M}_{\mu}-\frac{2}{\mu}-\mathcal{L}_{\mu}(g(x)+x))\,dx
\end{align*}
with additional complementary slackness condition
\begin{align}
  u(x)g(x)=0, \quad \forall x \ge 0.\label{csc}
\end{align}
We apply the KKT optimally conditions to Lagrangian function. Taking derivative with respect to $g(x)$ and equating to $0$ we get
\begin{align}
  g(x)+x=\frac{\frac{u(x)}{f(x)e^{-\mu g(x)}}+\mathcal{L}_{\mu}+\mu c\mathcal{L}_{\mu}-\mu\mathcal{M}_{\mu}-2}{\mu\mathcal{L}_{\mu}}\label{equ:gxx}
\end{align}
For notational simplicity, let us define
$\gamma =\frac{\mathcal{L}_{\mu}+\mu c\mathcal{L}_{\mu}-\mu\mathcal{M}_{\mu}-2}{\mu\mathcal{L}_{\mu}}$. The optimal solution $g(x)$ is obtained by considering the following three cases:

\textit{Case 1}: If $\gamma \le 0$, then by \eqref{csc} and $g(x)\ge 0$, we obtain $g(x)=0$.

\textit{Case 2}: If $\gamma > 0$ and $u(x)=0$, then by \eqref{equ:gxx}, we obtain $g(x)=\gamma -x$. In this case,  we require $\gamma -x\ge 0$ since $g(x)\ge 0$.

\textit{Case 3}: If $\gamma > 0$ and $u(x)>0$, then by \eqref{csc}, we obtain $g(x)=0$.

In summary, the optimal solution for Problem~\eqref{def:probX_const_lag_p} is $g(x)=\max\{0,\gamma -x\}$, which is a transmission-aware threshold policy. Therefore, when $c=c^*$, the solution Problem~\eqref{def:probX_const_lag_p} gives the optimal policy for Problem~\eqref{def:probX}. Since $f(c^*)=0$, we have
\begin{align*}
  c^*=\frac{\mathbb{E}\left[Z+(T+C)\mathbf{1}_{\Omega}\right]}{Pr(\Omega)},
\end{align*}
which means that $c^*$ is the minimum average PAoI $ \mathcal{P}^{*,\text{wp}}$. This completes the proof.

 
%

\bibliographystyle{IEEEtran}

\bibliography{IEEEabrv,refer}

\vfill

\end{document}